\documentclass[11pt]{article}

\usepackage{ifpdf}

\bibliographystyle{plain}

\usepackage{amssymb}

\usepackage{pstricks, pst-coil, pst-node, pst-tree, multido}

\textwidth 15.5cm
\textheight 21.5cm
\topmargin 0cm
\evensidemargin 0in
\oddsidemargin 0in
\newtheorem{DE}{Definition}[section]

\newcommand {\sm} {\setminus}

\sloppy

\usepackage{graphicx}
\usepackage{latexsym} 
\usepackage{theorem} 
\newcommand{\qed}{\relax\ifmmode\hskip2em\Box\else\unskip\nobreak\hfill$\Box$\fi}

\newtheorem{theorem}[DE]{Theorem}
\newtheorem{lemma}[DE]{Lemma}

\newtheorem{CL}[DE]{Claim}
{\theoremstyle{break}\theorembodyfont{\rmfamily}}
{\theoremstyle{break}\theorembodyfont{\rmfamily}}

\newcommand{\tp}{\!-\!}

\newcounter{claim}

\newenvironment{claim}[1][]%
{\refstepcounter{claim}\vspace{1ex}\noindent{(\it\arabic{claim}){#1}{}}\it}{\vspace{1ex}}

\newenvironment{proofclaim}[1][]%
    {\noindent {}{#1}{}}{ This proves~(\arabic{claim}).\vspace{1ex}}

 \newenvironment{proof}[1][]%
 {\noindent {\setcounter{claim}{0}\sc proof ---
    }{#1}{}}{\hfill$\Box$\vspace{2ex}}

\ifpdf
\DeclareGraphicsRule{*}{mps}{*}{}
\fi

\begin{document}

\title{The four-in-a-tree problem in triangle-free graphs}
\author{Nicolas Derhy\thanks{CEDRIC, CNAM, Paris (France),\ email:
    nicolas.derhy@cnam.fr}, Christophe Picouleau\thanks{CEDRIC, CNAM,
    Paris (France),\ email: chp@cnam.fr}, Nicolas
  Trotignon\thanks{CNRS, LIAFA, Universit\'e Paris 7 -- Paris Diderot (France),\ email: nicolas.trotignon@liafa.jussieu.fr}}

\date{March 10, 2009}

\maketitle

\begin{abstract}
  The three-in-a-tree algorithm of Chudnovsky and Seymour decides in
  time $O(n^4)$ whether three given vertices of a graph belong to an
  induced tree.  Here, we study four-in-a-tree for triangle-free graphs.
  We give a structural answer to the following question: what does a triangle-free graph look
  like  if no induced tree covers four
  given vertices~?  Our main result says that any such graph must have
  the ``same structure'', in a sense to be defined precisely, as a
  square or a cube.

  We provide an $O(nm)$-time algorithm that given a triangle-free
  graph~$G$ together with four vertices outputs either an induced tree
  that contains them or a partition of $V(G)$ certifying that no such
  tree exists.  We prove that the problem of deciding whether there exists
  a tree $T$ covering the four vertices such that at most one vertex of
  $T$ has degree at least~3 is NP-complete.
\end{abstract}

\noindent AMS Mathematics Subject Classification: 05C75, 05C85, 05C05,
68R10, 90C35

\noindent Key words: tree, algorithm, three-in-a-tree, four-in-a-tree,
triangle-free graphs, induced subgraph.

\section{Introduction}

Many interesting classes of graphs are defined by forbidding induced
subgraphs, see~\cite{chudnovsky.seymour:excluding} for a survey.  This
is why the detection of several kinds of induced subgraphs is
interesting, see~\cite{leveque.lmt:detect} for a survey.  In
particular, the problem of deciding whether a graph $G$ contains as an
induced subgraph some graph obtained after possibly subdividing
prescribed edges of a prescribed graph $H$ has been studied.  It
turned out that this problem can be polynomial or NP-complete
according to $H$ and to the set of edges that can be subdivided.
Details, examples and open problems are given
in~\cite{leveque.lmt:detect}.  The most general tool for solving this
kind of problems (when they are polynomial) seems to be the
\emph{three-in-a-tree} algorithm of Chudnovsky and Seymour:

\begin{theorem}[see \cite{chudnovsky.seymour:theta}]
  Let $G$ be a connected graph and $x_1, x_2, x_3$ be three distinct
  vertices of $G$. Then deciding if there exists an induced tree of
  $G$ that contains $x_1, x_2, x_3$ can be done in time $O(n^4)$.
\end{theorem}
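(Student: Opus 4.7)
The plan is to design an algorithm that repeatedly simplifies the instance via local reductions until either an induced tree through $x_1,x_2,x_3$ can be exhibited directly, or a structural obstruction proving non-existence is identified. A \emph{reduction} is an operation replacing $(G,\{x_1,x_2,x_3\})$ by a smaller instance $(G',\{x_1',x_2',x_3'\})$ with the same answer, and a correctness certificate in the positive case (an actual induced tree) or in the negative case (for instance, a partition of $V(G)$ separating the $x_i$ in a forbidden way).

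First I would reduce to the case where $G$ is connected, the $x_i$'s are pairwise non-adjacent (if, say, $x_1x_2\in E(G)$, case-split on whether the covering tree uses this edge), and every vertex outside $X=\{x_1,x_2,x_3\}$ has degree at least $2$ (a leaf of $G$ outside $X$ is never needed inside an induced tree covering $X$). Basic reductions then contract non-marked twins and handle the case in which two of the $x_i$ share a common neighbor. Next I would introduce more subtle reductions exploiting small separators: for instance, if $\{u,v\}$ is a $2$-cut such that one side contains at most one marked vertex, that side can often be replaced by a single edge or vertex whose adjacencies encode the existence of the relevant induced $uv$-paths, giving a bijection between induced trees on $X$ in the two instances.

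When no reduction applies, a structural theorem is needed to assert that $G$ has one of a bounded list of canonical shapes, on each of which the problem can be solved directly — either by constructing a covering induced tree, or by producing the required certificate of non-existence. The hardest part, which is the technical core of the Chudnovsky--Seymour argument, is to prove that the reduction set is \emph{complete}: every irreducible non-trivial instance really does fall into one of the canonical forms. This calls for a delicate case analysis of how induced paths between the $x_i$ interact with small cuts. A careful implementation performs each reduction test in $O(n^2)$ time via BFS-style induced-path searches rooted at the $x_i$, and since $O(n^2)$ reductions suffice in total, the overall running time is $O(n^4)$ as announced.
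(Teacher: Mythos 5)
This theorem is not proved in the paper at all: it is the three-in-a-tree theorem of Chudnovsky and Seymour, quoted from \cite{chudnovsky.seymour:theta} as a black box (the authors even stress that their own algorithm does not rely on it). So there is no in-paper proof to compare against, and your proposal must stand on its own --- which it does not. What you have written is a programme rather than a proof: the two load-bearing steps, namely the precise list of reductions together with a correctness proof for each (your $2$-cut reduction is only claimed to work ``often''), and the structural theorem asserting that every irreducible instance falls into a bounded list of canonical shapes, are named but never supplied. You acknowledge yourself that the completeness of the reduction set is ``the hardest part'' and then defer it; with that step missing the argument has no mathematical content. The complexity claim is equally unsupported: nothing in the proposal bounds the number of reductions by $O(n^2)$, shows that each reduction strictly decreases a suitable measure, or establishes that each reduction test runs in $O(n^2)$ time.

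For what it is worth, the actual Chudnovsky--Seymour argument does not proceed by kernelization-style local reductions. It establishes a structural dichotomy --- either an induced tree through the three terminals exists, or $V(G)$ admits an explicit partition certifying non-existence --- and the algorithm starts from a small configuration and repeatedly either absorbs a new vertex into the maintained structure or extracts a covering tree. That augmentation paradigm is precisely what the present paper imitates for four-in-a-tree in Lemmas~\ref{l:cube} and~\ref{l:square}. In its current state your proposal cannot be completed without essentially proving the theorem from scratch by other means.
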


\begin{figure}[p]
  \center
  \includegraphics{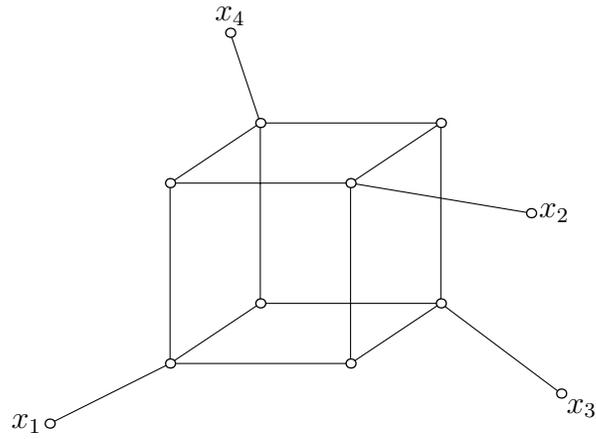}
  \caption{no tree covers $x_1, x_2, x_3, x_4$, first
    example\label{fig:cubeEx}}
\end{figure}

\begin{figure}[p]
  \center
  \includegraphics{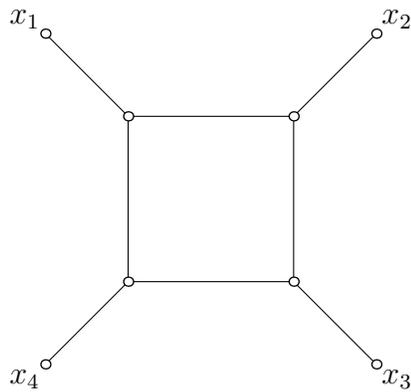}
  \caption{no tree covers $x_1, x_2, x_3, x_4$, second
    example\label{fig:squareEx}}
\end{figure}

How to use three-in-a-tree is discussed
in~\cite{chudnovsky.seymour:theta} and further evidences of its
generality are given in~\cite{leveque.lmt:detect}.  Because of the
power and deepness of three-in-a-tree, it would be interesting to
generalise it.  Here we study \emph{four-in-a-tree}: the problem whose
instance is a graph $G$ together with four of its vertices, and whose
question is ``Does $G$ contain an induced tree covering the four
vertices ?''.  Since this problem seems complicated to us, we restrict
ourselves to triangle-free graphs.  Our approach is similar to that of
Chudnovsky and Seymour for three-in-a-tree. We give a structural
answer to the following question: what does a triangle-free
graph look like if no induced tree covers four given vertices $x_1, x_2,
x_3, x_4$?  On Fig.~\ref{fig:cubeEx} and~\ref{fig:squareEx}, two
examples of such graphs are represented.  Our main result,
Theorem~\ref{th:main}, says that any triangle-free graph that does not
contain a tree covering four vertices $x_1, x_2, x_3, x_4$ must have
the ``same structure'', in a sense to be defined later, as one of the
two examples.  The details of the statement are given in
Section~\ref{s:statement}.

Our result is algorithmic: we provide an $O(nm)$-time algorithm that
given a graph~$G$ together with four vertices $x_1, x_2, x_3, x_4$
outputs either an induced tree that contains $x_1, x_2, x_3, x_4$ or a
partition of $V(G)$ certifying that no such tree exists.  Note that
apart from very basic subroutines such as Breadth First Search, our
algorithm is self-contained.  In particular it does no rely on
three-in-a-tree.  Our proofs will use the following result of
Derhy and Picouleau:

\begin{theorem}[see \cite{derhy.picouleau:threetfree}]
  Let $G$ be a triangle-free connected graph and $x_1, x_2, x_3$ be
  three distinct vertices of $G$. Then there is an induced tree of $G$
  that contains $x_1, x_2, x_3$. Moreover such a tree of minimum size
  can be done in time $O(m)$.
  \label{theo_3term}
\end{theorem}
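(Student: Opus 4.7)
The plan is to construct the desired tree as the union of shortest paths emanating from a well-chosen ``Steiner point,'' and then to exploit triangle-freeness to show that this union has no chord. Concretely, I would set $f(v)=d(v,x_1)+d(v,x_2)+d(v,x_3)$, pick a vertex $v^*$ minimising $f$, and let $T=P_1\cup P_2\cup P_3$, where $P_i$ is a shortest path from $v^*$ to $x_i$ extracted from a single BFS rooted at $v^*$. The argument will rest on two uses of the optimality of $v^*$: one to show $T$ is a tree, and one, combined with triangle-freeness, to show $T$ is induced.

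First I would verify that the $P_i$'s meet pairwise only at $v^*$. If $u\in P_i\cap P_j$ with $u\ne v^*$, then sub-path optimality gives $d(u,x_i)=d(v^*,x_i)-d(v^*,u)$ and $d(u,x_j)=d(v^*,x_j)-d(v^*,u)$, while the triangle inequality gives $d(u,x_k)\le d(v^*,u)+d(v^*,x_k)$ for the remaining terminal $x_k$; summing these yields $f(u)\le f(v^*)-d(v^*,u)<f(v^*)$, contradicting the choice of $v^*$. Hence $T$ is a tree.

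The main obstacle is showing that $T$ is induced, and this is where triangle-freeness enters. Each $P_i$ is a shortest path, hence induced, so any chord of $T$ in $G$ must connect two distinct branches, say $uw\in E(G)$ with $u\in P_i\setminus\{v^*\}$ and $w\in P_j\setminus\{v^*\}$. Re-running the same bookkeeping at $w$, now using $d(w,x_i)\le 1+d(u,x_i)$ via the chord, produces the bound $f(w)\le f(v^*)+1-d(v^*,u)$. Minimality of $v^*$ forces $d(v^*,u)\le 1$, and by symmetry $d(v^*,w)\le 1$; since $u,w\ne v^*$, both are neighbours of $v^*$, so $\{v^*,u,w\}$ induces a triangle, a contradiction. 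The temptation is to try to reroute paths locally each time a chord appears, but the cleaner move is to let the global minimality of $v^*$ do the work and invoke triangle-freeness only at the very end.

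Finally, $|E(T)|=f(v^*)$, and a standard median argument shows every tree of $G$ containing $x_1,x_2,x_3$ has at least $f(v^*)$ edges, so $T$ is of minimum size. For the running time, three BFS rooted at $x_1,x_2,x_3$ compute the distances $d(\cdot,x_i)$ in $O(m)$, $v^*$ is then found in $O(n)$, and one further BFS from $v^*$ with parent pointers extracts the $P_i$'s in $O(m)$, giving overall complexity $O(m)$.
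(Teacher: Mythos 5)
The paper does not contain a proof of Theorem~\ref{theo_3term}: it is imported from \cite{derhy.picouleau:threetfree} and used as a black box throughout, so there is no in-paper argument to compare yours against. Judged on its own merits, your proof is correct and complete. The three key computations all check out: the branch-disjointness bound $f(u)\le f(v^*)-d(v^*,u)$ for a shared vertex $u\ne v^*$; the chord analysis giving $f(w)\le f(v^*)+1-d(v^*,u)$ and symmetrically $f(u)\le f(v^*)+1-d(v^*,w)$, which by optimality of $v^*$ forces $u$ and $w$ to be neighbours of $v^*$ and hence yields a triangle; and the median lower bound, since the minimal subtree of any tree containing $x_1,x_2,x_3$ is a spider with some centre $c$ and at least $f(c)\ge f(v^*)$ edges, matching $|E(T)|=f(v^*)$. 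Two small points deserve an explicit sentence in a polished write-up: a chord cannot have $v^*$ as an endpoint because each branch together with $v^*$ is a shortest, hence induced, path; and minimising the number of edges of a tree is the same as minimising its number of vertices, which is presumably what ``minimum size'' means. The complexity accounting is fine given that $G$ is connected, so $O(n+m)=O(m)$.
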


Another generalisation of three-in-a-tree would be interesting.  Let
us call \emph{centered tree} any tree that contains at most one vertex
of degree greater than two.  Note that any minimal tree covering three
vertices of a graph is centered.  Hence, three-in-a-tree and
three-in-a-centered-tree are in fact the same problem.  So
four-in-a-centered-tree is also an interesting generalisation of
three-in-a-tree.  But we will prove in Section~\ref{s:npc} that it is
NP-complete, even when restricted to several classes of graphs,
including triangle-free graphs.

We leave open the following problems: four-in-a-tree for general
graphs, $k$-in-a-tree for triangle-free graphs.

\subsection*{Notation}

All our graphs are simple and finite.  We say that a graph $G$
\emph{contains} a graph $H$ if $G$ contains an induced subgraph
isomorphic to $H$.  We say that $G$ is $H$-free if it does not contain
$H$.  If $Z\subseteq V(G)$ then $G[Z]$ denotes the subgraph of $G$
induced by $Z$.  When we describe the complexity of an algorithm whose
input is a graph, $n$ stands for the number of its vertices and $m$
stands for the number of its edges.

We call \emph{path} any connected graph with at least one vertex of
degree~1 and no vertex of degree greater than~2. A path has at most
two vertices of degree~1, which are the \emph{ends} of the path. If
$a, b$ are the ends of a path $P$ we say that $P$ is \emph{from $a$
  to~$b$}. The other vertices are the \emph{interior} vertices of the
path. We denote by $v_1 \tp \cdots \tp v_n$ the path whose edge set is
$\{v_1v_2, \dots, v_{n-1}v_n\}$.  When $P$ is a path, we say that $P$
is \emph{a path of $G$} if $P$ is an induced subgraph of $G$. If $P$
is a path and if $a, b$ are two vertices of $P$ then we denote by $a
\tp P \tp b$ the only induced subgraph of $P$ that is path from $a$ to
$b$.

Note that by \emph{path of a graph}, we mean induced path.  Also, by
\emph{tree of a graph}, we mean an induced subgraph that is a tree.

The \emph{union} of two graphs $G= (V, E)$ and $G' = (V', E')$ is the
graph $G\cup G' = (V\cup V', E \cup E')$. A set $X \subseteq V(G)$ is
\emph{complete} to a set $Y \subseteq V(G)$ if there are all possible
edges between $X$ and $Y$. A set $X \subseteq V(G)$ is
\emph{anticomplete} to a set $Y \subseteq V(G)$ if there are no edges
between $X$ and~$Y$.

When $G$ is a graph and $v$ a vertex, $N(v)$ denotes the set of all
the neighbors of $v$. If $A \subseteq V(G)$ then $N(A)$ denotes the
set of these vertices of $G$ that are not in $A$ but that have
neighbors in $A$.  If $Z \subseteq V(G)$, then $N_Z(A)$ denotes
$N(A)\cap Z$.  If $H$ is an induced subgraph of $G$, then we write
$N_H(A)$ instead of $N_{V(H)}(A)$.

When we define $k$ sets $A_1, \dots, A_k$, we usually denote their
union by $A$. We use this with no explicit mention~: if we define sets
$S_1, \dots, S_8$ then $S$ will denote their union, and so on.

\section{Main results}
\label{s:statement}

A \emph{terminal} of a graph is a vertex of degree one. Given a graph
$G$ and vertices $y_1, \dots, y_k$, let us consider the graph $G'$
obtained from $G$ by adding for each $y_i$ a new terminal $x_i$
adjacent to $y_i$. It is easily seen that there exists an induced tree
of $G$ covering $y_1, \dots, y_k$ if and only if there exists an
induced tree of $G'$ covering $x_1, \dots, x_k$.  So,
four-terminals-in-a-tree and four-in-a-tree are essentially the same
problems, from an algorithmic point of view and from a structural
point of view.  Hence, for convenience, we may restrict ourselves to
the problem four-in-a-tree where the four vertices to be covered are
terminals.

As mentioned in the introduction, our main result states that a graph
that does not contain a tree covering four given terminals $x_1, x_2,
x_3, x_4$ must have the ``same structure'' as one of the graphs
represented on Fig.~\ref{fig:cubeEx} or \ref{fig:squareEx}.  Let us
now define this precisely.

A graph that has the same structure as the graph represented on
Fig~\ref{fig:cubeEx} is what we call a cubic structure: a graph $G$ is
said to be a \emph{cubic structure} with respect to a 4-tuple of
distinct terminals $(x_1, x_2, x_3, x_4)$ if there exist sets $A_1,
\dots A_4, $ $B_1, \dots B_4$, $S_1, \dots, S_8$ and $R$ such that:

\begin{enumerate}
  \item
    $A \cup B \cup S \cup R = V(G)$;
  \item
    $A_1, \dots, A_4, B_1, \dots, B_4, S_1, \dots, S_8, R$ are
    pairwise disjoint;
  \item
    $x_i \in A_i$, $i= 1, \dots, 4$;
  \item
    $S_i$ is a stable set, $i= 1, \dots, 8$;
  \item
    $S_i$ is non-empty, $i= 1, \dots, 4$;
  \item
    \label{i:cubeNe}
    at most one of $S_5, S_6, S_7, S_8$ is empty;
  \item
    $S_i$ is complete to $(S_5 \cup S_6 \cup S_7 \cup S_8) \sm
    S_{i+4}$, $i= 1, 2, 3, 4$;
  \item
    $S_i$ is anticomplete to $S_{i+4}$, $i= 1, 2, 3, 4$;
  \item
    $S_i$ is anticomplete to $S_j$, $1 \leq i < j \leq 4$;
  \item
    $S_i$ is anticomplete to $S_j$, $5 \leq i < j \leq 8$;
  \item
    \label{i:cubeNAi}
    $N(A_i) = S_i$, $i= 1, 2, 3, 4$;
  \item
    \label{i:cubeNBi}
    $N(B_i) \subseteq S_i \cup N_S(S_i)$,  $i= 1, 2, 3, 4$;
  \item
    \label{i:cubeNR}
    $N(R) \subseteq S_5 \cup S_6 \cup S_7 \cup S_8$;
  \item
    \label{i:cubeACon}
     $G[A_i]$ is connected, $i= 1, 2, 3, 4$.
\end{enumerate}

\begin{figure}
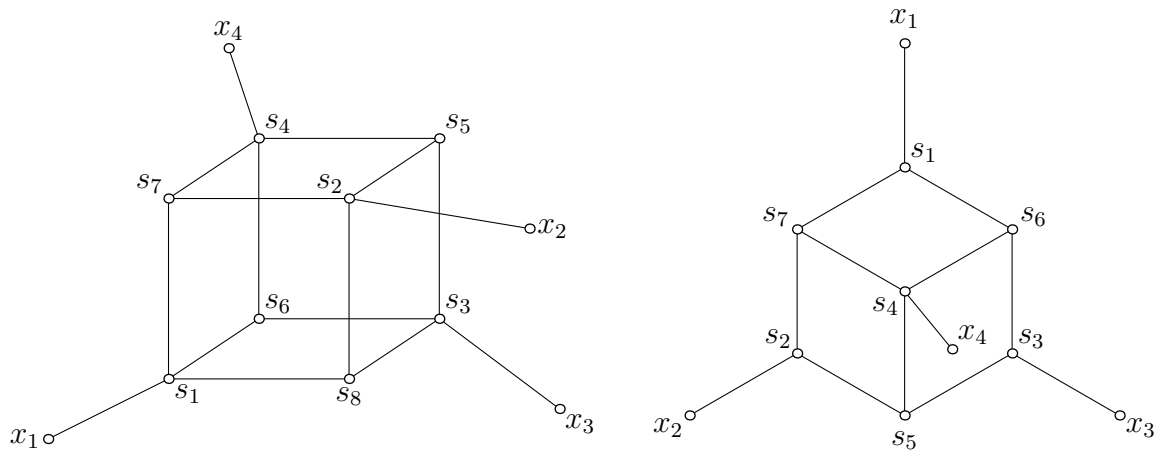

  \center
  \includegraphics{figFtree.1}
  \rule{0.5cm}{0cm}
 \includegraphics{figFtree.3}
 \caption{Two examples of cubic structure\label{fig:cubic}}
\end{figure}

A 17-tuple $(A_1, \dots A_4, $ $B_1, \dots B_4$, $S_1, \dots, S_8, R)$
of sets like in the definition above is a \emph{split} of the cubic
structure.  On Fig.~\ref{fig:cubic}, two cubic structures are
represented.  A \emph{cubic structure of a graph $G$} is a subset $Z$
of $V(G)$ such that $G[Z]$ is a cubic structure.  The following lemma,
to be proved in Section~\ref{s:cube}, shows that if a cubic structure
is discovered in a triangle-free graph, then one can repeatedly add
vertices to it, unless at some step a tree covering $x_1, x_2, x_3,
x_4$ is found:

\begin{lemma}
  \label{l:cube}
  There is an algorithm with the following specification:

  \begin{description}
  \item
    {\sc Input:} a triangle-free graph $G$, four terminals $x_1, x_2,
    x_3, x_4$, a split of a cubic structure $Z$ of $G$, and a vertex
    $v\notin Z$.
  \item
    {\sc Output:} a tree of $G[Z \cup \{v\}]$ that covers $x_1, x_2,
    x_3, x_4$ or a split of the cubic structure $G[Z \cup \{v\}]$.
  \item
    {\sc Complexity: } $O(m)$.
  \end{description}
\end{lemma}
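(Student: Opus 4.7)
The plan is to inspect $N:=N(v)\cap Z$ and attempt to add $v$ to exactly one of the seventeen sets of the given split in such a way that the fourteen defining axioms of a cubic structure remain valid. If no such placement exists, the obstruction itself will be used to produce an induced tree of $G[Z\cup\{v\}]$ covering $x_1,\dots,x_4$. After a preprocessing step that tags each vertex of $Z$ with a pointer to the set it belongs to in the split, partitioning $N$ into its seventeen classes costs $O(\deg(v))$.

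The case analysis splits naturally according to whether $v$ has a neighbor in some $A_i$. Suppose first that $v$ has a neighbor in $A_i$. Because $N(A_i)=S_i$ by axiom~(\ref{i:cubeNAi}), the only slots accommodating $v$ are $A_i$ itself or $S_i$. Adding $v$ to $A_i$ requires $N\setminus A_i\subseteq S_i$, and then connectedness~(\ref{i:cubeACon}) is automatic since $G[A_i]$ is already connected; adding $v$ to $S_i$ requires $v$ to have a neighbor in $A_i$ (already true), to be non-adjacent to $S_i\cup S_{i+4}\cup\bigcup_{j\neq i,\,j\le 4}(A_j\cup B_j\cup S_j)\cup R$, and to be complete to $(S_5\cup\cdots\cup S_8)\setminus S_{i+4}$. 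If neither placement is feasible, some witness in $N$ records the failure: a neighbor of $v$ in a ``wrong'' $A_j$, or in $S_{i+4}$, or in $R$, or a missing edge to $S_5\cup\cdots\cup S_8$. In each such configuration I would build an induced tree by concatenating, for each $j\in\{1,\dots,4\}$, a BFS-shortest path $P_j$ in $A_j$ from $x_j$ to a vertex with a neighbor in $S_j$ (possible by~(\ref{i:cubeACon}) and~(\ref{i:cubeNAi})), and then routing the $P_j$'s through $v$ together with one or two vertices of $S_5\cup\cdots\cup S_8$, whose non-emptiness is ensured by~(\ref{i:cubeNe}).

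The sub-case in which $v$ has no neighbor in any $A_i$ is handled symmetrically: the candidate slots become $B_i$, $R$, some $S_i$ with $i\in\{1,\dots,4\}$, or some $S_i$ with $i\in\{5,\dots,8\}$, each with local adjacency constraints read off the axioms (stability, completeness to the correct fragment of $S_5\cup\cdots\cup S_8$, $N(R)\subseteq S_5\cup\cdots\cup S_8$, $N(B_i)\subseteq S_i\cup N_S(S_i)$). If none fits, the conflict is used exactly as before to route an induced tree through $v$. The main obstacle throughout is to certify that in every unplaceable configuration a genuinely induced tree can be produced. Triangle-freeness is exploited twice here: it forces any two neighbors of $v$ in $Z$ to be non-adjacent, so that $v$ can serve as a branching vertex without creating a chord, and it rules out several candidate configurations of $N$ a priori; the BFS-minimality of the $A_j$-paths, combined with $N(A_j)=S_j$, then prevents hidden chords across the arms of the tree. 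Each primitive operation---partitioning $N$, running BFS inside the relevant $A_j$'s, and selecting a constant number of skeleton vertices in $S_5\cup\cdots\cup S_8$---costs $O(m)$, yielding the announced complexity.
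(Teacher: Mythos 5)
There is a genuine gap: you assume that augmenting the split means inserting $v$ alone into exactly one of the seventeen sets, and that whenever no single slot accepts $v$ an induced tree must exist. Neither half of this is true, and the second half fails outright. Consider a cubic structure containing a vertex $b\in B_1$ whose only neighbour in $Z$ is a vertex $s_1\in S_1$, and let $v$ be adjacent exactly to $x_1\in A_1$ and to $b$. Then $v$ fits in no set of the given split: not in $A_1$ (since $b$ would violate $N(A_1)=S_1$), not in $S_1$ (since $v$ is not complete to $S_6\cup S_7\cup S_8$), not in $B_1$ or $R$ (since $v$ sees $A_1$), and not elsewhere. Yet $G[Z\cup\{v\}]$ is still a cubic structure --- one simply moves \emph{both} $v$ \emph{and} $b$ into $A_1$ --- so no covering tree exists and your algorithm has no valid output. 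The paper's proof is built around exactly this phenomenon: after fixing the neighbour $a_1\in A_1$ of $v$, it examines the whole set $Y$ of vertices of $B\cup R\cup\{v\}$ reachable from $v$ by paths with interior in $B\cup R$, proves (Claims about paths $Q=v\tp\cdots\tp w$) that either such a path reaches a forbidden part of $Z$ and yields a tree, or every vertex of $Y$ has a constrained neighbourhood, and then \emph{redistributes} $Y$ as $Y_1\to A_1$, $Y_2\to S_1$, $Y_3\to B_1$ (and, in the case where $v$ joins $S_8$, moves certain components of $B_4$ into $R$). Your proposal contains no mechanism for reassigning previously placed vertices, so it cannot be completed as stated.

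A second, related defect is your tree-construction recipe. You propose to route the four arms $P_j$ through $v$ together with one or two vertices of $S_5\cup\cdots\cup S_8$. But in many of the obstructing configurations the connecting route must pass through a path $Q$ whose interior lies in $B\cup R$ (e.g.\ $v$ adjacent to $x_1$ and to some $r\in R$ with $r$ adjacent to $S_5$: the tree uses $x_1\tp v\tp r\tp s_5$). Detecting these situations requires a BFS through $B\cup R$ from $v$, not merely an inspection of $N(v)\cap Z$; this is precisely why the paper's claims are phrased in terms of minimal paths $Q=v\tp\cdots\tp w$ with $Q\sm v\subseteq B\cup R$. The complexity claim $O(m)$ survives either way, but the case analysis you sketch is over a strictly smaller set of configurations than the one the lemma actually requires.
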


Let us now turn our attention to our second kind of structure.  A
graph that has the same structure as the graph represented on
Fig~\ref{fig:squareEx} is what we call a square structure: a graph $G$
is said to be a \emph{square structure} with respect to a 4-tuple
$(x_1, x_2, x_3, x_4)$ of distinct terminals if there are sets $A_1,
A_2, A_3, A_4, S_1, S_2, S_3, S_4, R$ such that:

\begin{enumerate}
  \item
    $A \cup S \cup R = V(G)$;
  \item
    $A_1, A_2, A_3, A_4, S_1, S_2, S_3, S_4, R$ are pairwise disjoint;
  \item
    $x_i \in A_i$, $i= 1, \dots, 4$;
  \item
    $S_i$ is a stable set, $i= 1, \dots, 4$;
  \item
    $S_1, S_2, S_3, S_4 \neq \emptyset$;
  \item
    $S_i$ is complete to $S_{i+1}$, where the addition of subscripts is
    taken modulo 4, $i= 1, 2, 3, 4$;
  \item
    $S_i$ is anticomplete to $S_{i+2}$, $i= 1, 2$;
  \item
    \label{i:squareNAi}
    $N(A_i) = S_i$ , $i= 1, 2, 3, 4$;
  \item
    $N(R) \subseteq S_1 \cup S_2 \cup S_3 \cup S_4$;
  \item
    \label{i:squareACon}
    $G[A_i]$ is connected, $i= 1, \dots, 4$.
 \end{enumerate}

\begin{figure}
  \center
  \includegraphics{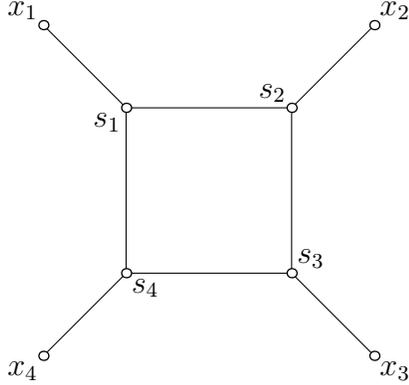}
  \caption{The smallest square structure\label{fig:square}}
\end{figure}

A 9-tuple $(A_1, \dots A_4, S_1, \dots, S_4, R)$ of sets like in the
definition above is a \emph{split} of the square structure.  On
Fig.~\ref{fig:square}, the smallest square structure is represented.
A \emph{square structure of a graph $G$} is a subset $Z$ of $V(G)$
such that $G[Z]$ is a square structure. The following lemma, to be
proved in Section~\ref{s:square}, shows that if a square structure is
discovered in a triangle-free graph, then one can repeatedly add
vertices to it, unless at some step a cubic structure or a tree
covering $x_1, x_2, x_3, x_4$ is found:

\begin{lemma}
  \label{l:square}
  There is an algorithm with the following specification:

  \begin{description}
  \item
    {\sc Input:} a triangle-free graph $G$, four terminals $x_1, x_2,
    x_3, x_4$, a split of a square structure $Z$ of $G$, and a vertex
    $v \notin Z$.
  \item
    {\sc Output:} a tree of $G[Z \cup \{v\}]$ that covers $x_1, x_2,
    x_3, x_4$ or a split of some cubic structure of $G$ or a split of
    the square structure $G[Z \cup \{v\}]$.
  \item
    {\sc Complexity: } $O(m)$.
  \end{description}
\end{lemma}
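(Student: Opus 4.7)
The plan is a case analysis on the neighborhood of $v$ within the given split of the square structure. In $O(m)$ time we first compute $T_i = N(v) \cap A_i$, $U_i = N(v) \cap S_i$, and $W = N(v) \cap R$ for $i=1,\dots,4$. Two observations are used throughout. Since $G$ is triangle-free and $S_i$ is complete to $S_{i\pm 1}$, the vertex $v$ cannot have neighbors in two cyclically consecutive $S$-sets, so $\{i : U_i \neq \emptyset\}$ is contained in $\{1,3\}$ or in $\{2,4\}$. Second, if $u \in T_i$ and $s \in U_i$ then $us \notin E(G)$, since otherwise $\{v,u,s\}$ would form a triangle.

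The analysis branches on how many of $T_1, T_2, T_3, T_4$ are non-empty. If none is, we add $v$ to $R$: the only square-structure conditions to check are condition~\ref{i:squareNAi} and the one on $N(R)$, and both are immediate because the $Z$-neighbors of $v$ lie in $S \cup R$. If exactly one $T_j$ is non-empty, we try in turn to place $v$ into $A_j$ (when $N(v) \cap Z \subseteq A_j \cup S_j$) or into $S_j$ (when $v$'s $A$-neighbors lie only in $A_j$, $v$ is complete to $S_{j-1} \cup S_{j+1}$, anticomplete to $S_j \cup S_{j+2}$, and its remaining $Z$-neighbors lie in $R$). If neither placement works, the ``forbidden'' edges incident to $v$ let us either build an induced tree covering the four terminals, using a shortest induced path from $x_j$ to $T_j$ inside $G[A_j]$ (which is connected by condition~\ref{i:squareACon}) together with the complete bipartite links between consecutive $S$-sets to reach the other three terminals, or identify a cubic structure inside $G[Z \cup \{v\}]$ by re-assigning some $A_j$-vertices to a $B_j$ and relabelling certain $S$-sets as cubic's $S_5,\dots,S_8$.

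If two $T_j$'s are non-empty, we distinguish whether the two corners are cyclically adjacent (say $T_1, T_2$) or opposite ($T_1, T_3$). In each sub-case, we attempt to build an induced tree centered on a path through $v$, taking shortest paths from $x_j$ to $T_j$ inside $G[A_j]$ so that $v$'s only neighbor on the $A_j$-part is its endpoint, and attaching the remaining terminals through edges in the $S$-structure; where this attempt fails, we show the configuration defines a cubic structure on $Z \cup \{v\}$. If three or four $T_j$'s are non-empty, the same path-selection argument immediately produces an induced ``star-like'' tree centered at $v$ with one branch per $T_j$; any missing terminal (in the three-$T_j$ case) is attached through the $S$-structure exactly as in the two-adjacent sub-case.

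The main obstacle is the two-opposite-corners sub-case: showing that the failure of the tree-construction forces a valid split of a cubic structure requires verifying the full list of conditions~\ref{i:cubeNe}--\ref{i:cubeACon}, in particular all the completeness and anti-completeness relations among $S_1,\dots,S_8$ together with the $N(A_i)$, $N(B_i)$ and $N(R)$ conditions. Each sub-case uses only adjacency scans together with at most one BFS inside some $G[A_i]$, each of cost $O(m)$, which yields the claimed $O(m)$ overall complexity.
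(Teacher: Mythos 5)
There is a genuine gap: your case analysis is driven entirely by $N(v)\cap Z$, and it never accounts for the fact that absorbing $v$ into $A_1$ or $S_1$ forces a reassignment of part of $R$. Concretely, suppose exactly one $T_j$ is non-empty, say $T_1\neq\emptyset$, and $v$ also has a neighbour in $R$. Then your first placement ($N(v)\cap Z\subseteq A_1\cup S_1$) fails, your second placement (completeness to $S_2\cup S_4$) may fail as well, and you then claim that a tree or a cubic structure must exist --- but that is false. Take the smallest square structure with $A_i=\{x_i,a_i\}$, $S_i=\{s_i\}$, augmented by $R=\{r\}$ where $N(r)=\{s_2,s_4\}$, and let $N(v)=\{a_1,r\}$. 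One checks that $G[Z\cup\{v\}]$ contains no induced tree covering the four terminals and no cubic structure; the correct output is the augmented square structure obtained by putting $v$ into $A_1$ and migrating $r$ into $S_1$ (which is legitimate precisely because $r$ is complete to $S_2\cup S_4$). Your scheme has no branch producing this answer, so as written the algorithm would be incorrect on such inputs.

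This migration is exactly where the paper's proof does its real work, and it is absent from your proposal. The paper does not reason about $N(v)$ alone: Claim~\ref{c:squareA4} analyses paths $Q=v\tp\cdots\tp w$ with $Q\setminus v\subseteq R$, and shows that any vertex $w$ of $R$ reachable from $v$ that touches $(A\setminus A_1)\cup(S\setminus S_1)$ either yields a tree, or yields a cubic structure, or satisfies $N_S(w)=S_2\cup S_4$ and $N_A(w)\subseteq A_1$. It then partitions the set $Y$ of vertices of $R\cup\{v\}$ reachable from $v$ through $R$ into $Y_1,Y_2,Y_3$ according to completeness to $S_2\cup S_4$, and moves $Y_1$ into $A_1$ and $Y_2$ into $S_1$ so that the condition $N(A_1)=S_1$ survives. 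Without some equivalent of this propagation step your proof cannot be completed; the remaining vaguenesses (the unverified cubic-structure split in the two-opposite-corners sub-case, and the unproved claim that ``forbidden edges'' always yield a tree or a cubic structure) are secondary by comparison, but they too would need to be filled in with arguments of the kind the paper gives.
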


From the two lemmas above the main theorem follows:

\begin{theorem}
  \label{th:main}
  Let $G$ be a connected graph and $x_1, x_2, x_3, x_4$ be four
  distincts terminals of $G$. Then either:
  \begin{itemize}
  \item[(1)]
    $G$ is a cubic or a square structure with respect to $(x_1, x_2,
    x_3, x_4)$;
  \item[(2)]
    $G$ contains a tree that covers $x_1, x_2, x_3, x_4$.
  \end{itemize}

  Moreover, exactly one of these two statements (1) and (2) holds.  This
  result is algorithmic in the sense that there exists an $O(nm)$-time
  algorithm whose input is a graph and four terminals $x_1, x_2, x_3,
  x_4$ and whose output is either a partition of $V(G)$ showing that
  $G$ is a cubic or a square structure with respect to $(x_1, x_2,
  x_3, x_4)$, or a tree that covers $x_1, x_2, x_3, x_4$.
\end{theorem}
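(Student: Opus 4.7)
\medskip

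\noindent\textbf{Proof proposal.} I would split the argument into three parts: mutual exclusivity of~(1) and~(2), existence (constructive), and complexity.

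\emph{Exclusivity.} Suppose $G$ is a square structure with split $(A_1,\ldots,A_4,S_1,\ldots,S_4,R)$ and, for contradiction, that an induced tree $T$ covers $x_1,\ldots,x_4$. Since $G[A_i]$ is connected with $N(A_i)=S_i$, $T$ must enter each $S_i$; pick $s_i\in V(T)\cap S_i$. Consecutive $S_i$'s are complete and diagonal ones are anticomplete, so $\{s_1,s_2,s_3,s_4\}$ induces a $C_4$ in $G$. Combined with the path from each $x_i$ to $s_i$ inside $A_i$, this forces $T$ to contain a cycle, a contradiction. The cubic case is similar: the hitting vertices $s_1,\ldots,s_4\in S_1,\ldots,S_4$ must be connected inside $S_5\cup\cdots\cup S_8$, and the completeness/anticompleteness pattern between the two layers again produces a cycle.

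\emph{Existence.} The plan is to maintain a subset $Z\subseteq V(G)$ containing $x_1,\ldots,x_4$ together with either a valid cubic split, a valid square split, or an induced tree of $G[Z]$ covering the four terminals, and to grow $Z$ one vertex at a time using Lemmas~\ref{l:cube} and~\ref{l:square}. For the base case, let $y_i$ be the unique neighbor of $x_i$ and start with $Z_0=\{x_1,\ldots,x_4,y_1,\ldots,y_4\}$ (with possible identifications). A short case analysis of the adjacencies on $\{y_1,\ldots,y_4\}$ either yields an induced tree covering $x_1,\ldots,x_4$ directly (this includes all cases with $y_i=y_j$, or where $\{y_1,\ldots,y_4\}$ induces any connected non-cyclic subgraph), or exhibits $Z_0$ as the smallest square structure (when the $y_i$'s induce a $C_4$); in the remaining configurations we extend $Z_0$ with a constant number of BFS-reachable vertices until one of the three alternatives is realized. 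Once $Z$ is initialized, while $Z\neq V(G)$ we pick $v\in V(G)\sm Z$ adjacent to $Z$ (which exists by connectivity of $G$), and invoke Lemma~\ref{l:square} if $G[Z]$ is a square structure, or Lemma~\ref{l:cube} if it is cubic. A \textquoteleft tree\textquoteright\ output terminates the algorithm; otherwise we replace $Z$ by the new split and iterate. Note that Lemma~\ref{l:cube} returns only a tree or a cubic split, so the cubic regime is absorbing; Lemma~\ref{l:square} may hand us a cubic split, in which case we simply switch regime and continue with Lemma~\ref{l:cube} from then on. When $Z=V(G)$ the invariant on $Z$ is exactly the disjunction claimed by the theorem.

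\emph{Complexity.} The base case inspects a constant-size neighborhood plus, in the degenerate subcases, one call to Theorem~\ref{theo_3term} in $O(m)$. Each grow step adds one vertex to $Z$ and runs in $O(m)$ by Lemmas~\ref{l:cube} and~\ref{l:square}, so there are at most $n$ of them; total $O(nm)$.

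\emph{Main obstacle.} I expect the delicate part of the proof to be the base case: producing a valid initial split (or a tree) from the neighborhood of the four terminals, with enough care that the subsequent calls to Lemmas~\ref{l:cube} and~\ref{l:square} remain applicable. Once $Z_0$ is in hand, the growth phase is a mechanical iteration of the two lemmas and the exclusivity argument seals correctness.
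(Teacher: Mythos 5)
Your overall architecture matches the paper's: prove exclusivity directly from the definitions, then establish existence by initializing some structure and growing it one vertex at a time via Lemmas~\ref{l:cube} and~\ref{l:square}, with the square regime possibly handing off to the (absorbing) cubic regime. Your exclusivity argument is essentially the paper's, and your growth phase and complexity count are correct.

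The genuine gap is the initialization, and it is exactly where you yourself flag the difficulty. Your proposed base case --- take $Z_0=\{x_1,\dots,x_4,y_1,\dots,y_4\}$, analyze the adjacencies among the $y_i$, and otherwise ``extend $Z_0$ with a constant number of BFS-reachable vertices'' --- does not work. When the $y_i$ are pairwise non-adjacent and pairwise far apart (the generic situation), $G[Z_0]$ is disconnected, no tree covering the terminals lives inside it, and no bounded-size extension can realize any of the three alternatives: the smallest structure one can hope to exhibit is in general of unbounded size, because its $A_i$-parts are long induced paths. The paper's first step supplies the missing idea: invoke Theorem~\ref{theo_3term} (three-in-a-tree always succeeds in connected triangle-free graphs) to get a minimal induced tree $T=P_1\cup P_2\cup P_3$ covering $x_1,x_2,x_3$ with a single branch vertex $c$, then run BFS from $x_4$ to find a minimal path $Q=x_4\tp\cdots\tp w$ attaching to $T$, and case-analyze how $w$ attaches. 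If $w$ sees one or all three of the $P_i$, or two of them with at least one attachment vertex non-adjacent to $c$, a covering tree is assembled (using Theorem~\ref{theo_3term} again inside $P_i\cup\{w\}$ to handle multiple attachments on one path); in the single remaining case ($w$ sees exactly $P_1$ and $P_3$, with $u_1c,u_3c\in E(G)$) the set $V(P_1\cup P_2\cup P_3\cup Q)$ is a square structure with $S_1=\{u_1\}$, $S_2=\{c\}$, $S_3=\{u_3\}$, $S_4=\{w\}$ and the $A_i$ the corresponding path segments. Without this (or an equivalent bootstrapping argument) your proof does not get off the ground; with it, the rest of your proposal goes through as in the paper.
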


\begin{proof}
  Let us first check that at most one of (1), (2) holds.  This means
  that a square or cubic structure with respect to a 4-tuple $(x_1,
  x_2, x_3, x_4)$ of terminals cannot contain a tree covering $x_1,
  x_2, x_3, x_4$.  For suppose that such a tree $T$ exists in a square
  structure $G$ with a split $(A_1, \dots, A_4, S_1, \dots, S_4, R)$.
  By the definition of square structures $T$ must contain a vertex in
  every $S_i$, $i= 1, 2, 3, 4$.  So, $T$ contains a square, a
  contradiction.

  Suppose now that such a tree $T$ exists in a cubic structure $G$
  with a split $(A_1, \dots, A_4, B_1, \dots, B_4, S_1, \dots, S_8,
  R)$.  By the definition of cubic structures $T$ must contain a
  vertex in every $S_i$, $i= 1, 2, 3, 4$.  Since $T$ contains no
  cycle, $T$ has vertices in at most one of $S_5, S_6, S_7, S_8$, say
  in $S_5$ up to symmetry.  So, $T$ contains no vertex of $S_6 \cup
  S_7 \cup S_8$.  So $x_1, x_2$ lie in two different components of
  $T$, a contradiction.

  The fact that at least one of (1), (2) holds follows directly from
  the algorithm announced in the theorem.  A description of this
  algorithm will complete the proof. So let us suppose that $G$ and
  four terminals $x_1, x_2, x_3, x_4$ are given. The algorithm goes
  through three steps:

  \vspace{1ex}

  \noindent {\bf First step:} by Theorem~\ref{theo_3term} we find in
  time $O(m)$ a minimal tree $T$ that covers $x_1, x_2, x_3$.  Note that since $x_1, x_2, x_3$ are of
  degree one, $T$ contains a vertex $c$ of degree~$3$ and is the union
  of three paths $P_1 = c \tp \cdots \tp x_1$, $P_2 = c \tp \cdots \tp
  x_2$, $P_3 = c \tp \cdots \tp x_3$.

  Then we use BFS (short name for Breadth First Search,
  see~\cite{gibbons:agt}) to find a path $Q = x_4 \tp \cdots \tp w$
  such that $w$ has neighbors in $T$, and minimal with respect to this
  property.  If $w$ has a neighbor in $P_i$ then we let $u_i$ be the
  neighbor of $w$ closest to $x_i$ along $P_i$.

  If $w$ has neighbors in $P_1, P_2, P_3$ then note that when $1\leq i
  < j \leq 3$, $u_iu_j \notin E(G)$ because else, $G$ contains a
  triangle.  So $V(Q \cup (u_1 \tp P_1 \tp x_1) \cup (u_2 \tp P_2 \tp
  x_2) \cup (u_3 \tp P_3 \tp x_3))$ induces a tree that covers $x_1, x_2,
  x_3, x_4$, so we stop the algorithm and output this tree.  Note that
  from here on, $wc\notin E(G)$.

  If $w$ has neighbors in exactly one of $P_1, P_2, P_3$, say in $P_1$
  up to symmetry, then we compute by Theorem~\ref{theo_3term} a tree $T'$
  of $G[P_1 \cup \{w\}]$ that minimally covers $w, c, x_1$. We see that
  $V(Q \cup T' \cup P_2 \cup P_3)$ induces a tree that covers $x_1,
  x_2, x_3, x_4$, so we stop the algorithm and output this tree.

  So, we are left with the case where $w$ has neighbors in two paths
  among $P_1, P_2, P_3$, say in $P_1, P_3$ up to symmetry.  Then there
  are two cases.  First case: one of $u_1c, u_3c$ is not in $E(G)$.
  Up to symmetry we suppose $u_1c \notin E(G)$.  We compute by
  Theorem~\ref{theo_3term} a tree $T''$ of $G[P_3 \cup \{w\}]$ that
  minimally covers $w, c, x_3$. We see that $V(Q \cup (x_1 \tp P_1 \tp
  u_1) \cup T'' \cup P_2)$ induces a tree that covers $x_1, x_2, x_3,
  x_4$, so we stop the algorithm and output this tree.  Second case:
  $u_1c, u_3c$ are both in $E(G)$.  Then we observe that $V(P_1 \cup
  P_2 \cup P_3 \cup Q)$ is a square structure of~$G$. A split can be
  done by putting $A_1 = V(x_1 \tp P_1 \tp u_1) \sm \{u_1\}$, $A_2 = V(P_2)\sm \{c\}$, $A_3 =
  V(x_3 \tp P_3 \tp u_3) \sm \{u_3\}$; $A_4 =
  V(Q) \sm \{w\}$, $S_1 = \{u_1\}$, $S_2 = \{c\}$, $S_3 = \{u_3\}$,
  $S_4 = \{w\}$ and $R = \emptyset$.  We keep this square structure $Z$
  and go the next step.

 \vspace{1ex}

  \noindent {\bf Second step:} while there exists a vertex $v$ not in
  $Z$, we use the algorithm of Lemma~\ref{l:square} to add $v$ to $Z$,
  keeping a square structure.  If we manage to put every vertex of $G$
  in $Z$ then we have found that $G$ is a square structure that we
  output.  Else, Lemma~\ref{l:square} says that at some step we have
  found either a tree covering $x_1, x_2, x_3, x_4$ that we output, or
  a cubic structure $Z'$, together with a split for it. In this last
  case, we go to the next step.

 \vspace{1ex}

  \noindent {\bf Third step:} while there exists a vertex $v$ not in
  $Z'$, we use the algorithm of Lemma~\ref{l:cube} to add $v$ to $Z'$,
  keeping a cubic structure.  If we manage to put every vertex of $G$
  in $Z'$ then we have found that $G$ is a cubic structure that we
  output.  Else, Lemma~\ref{l:cube} says that at some step we have
  found a tree covering $x_1, x_2, x_3, x_4$ that we output.

 \vspace{1ex}

  \noindent {\bf Complexity analysis:} we run at most $O(n)$ times $O(m)$
  algorithms. So the overall complexity is $O(nm)$.
\end{proof}

\section{Proof of Lemma~\ref{l:square}}
\label{s:square}

  Let $Z \subseteq V(G)$ be a square structure of $G$ with respect to
  $x_1, x_2, x_3, x_4$ together with a split like in the definition
  and let $v$ be in $V(G) \sm Z$.  Note that $Z$, the split of $Z$ and
  $v$ are given by assumption.

  Here below, we give a proof of the existence of the objects that the
  algorithm of our Lemma must output, namely a tree, a cubic structure
  or an augmented square structure.  But this proof is in fact the
  description of an $O(m)$-time algorithm.  To see this, it suffices
  to notice that all the proof relies on a several run of BFS or of
  the algorithm of Theorem~\ref{theo_3term}, and on checks of
  neighborhoods of several vertices.  At the end, we give more
  information on how to transform our proof into an algorithm.

  When $s_i \in S_i\cup A_i$, we define the path $P_{s_i}$ to be a
  path from $s_i$ to $x_i$, whose interior is in $A_i$, $i= 1, 2, 3,
  4$.  Note that $P_{s_i}$ exists since by Item~\ref{i:squareACon} of
  the definition of square structures, $G[A_i]$ is connected and by
  Item~\ref{i:squareNAi} every vertex of $S_i$ has a neighbor in
  $A_i$, $i=1, 2, 3, 4$.

  If $v$ has no neighbor in $A$ then $v$ can be put in $R$ and we
  obtain a split of the square structure $Z\cup \{v\}$.  So we may
  assume that $v$ has a neighbor in $A$, say $a_1 \in A_1$.  We choose
  $a_1$ subject to the minimality of $P_{a_1}$.

  \begin{CL}
    \label{c:squareA4}
    Suppose that there exists a path $Q= v \tp \cdots \tp w$ where
    $Q\sm v \subseteq R$ and such that $w$ has neighbors in $(A\sm
    A_1) \cup (S\sm S_1)$. Suppose $Q$ minimal with respect to these
    properties.  Then either:
    \begin{enumerate}
      \item
        there exists a tree of $G[Z \cup \{v\}]$ that covers
        $x_1,$ $x_2,$ $x_3,$ $x_4$;
      \item
        $N_{S} (w) = S_2 \cup S_4$ and $N_A(w) \subseteq A_1$;
      \item
        $G[Z \cup \{v\}]$ contains a cubic structure.
    \end{enumerate}
  \end{CL}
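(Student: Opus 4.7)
The plan is to analyze $N_S(w)$. First, since $w\in Q\sm v\subseteq R$ and by the square-structure definition $N(R)\subseteq S$, we have $N_A(w)=\emptyset$, so the $N_A(w)\subseteq A_1$ half of conclusion 2 comes for free, and the hypothesis reduces to $N_S(w)\cap(S\sm S_1)\neq\emptyset$. Moreover, because $G$ is triangle-free, $N(w)$ is stable, and because $S_i$ is complete to $S_{i+1}$ (indices mod $4$), $N_S(w)$ cannot meet both $S_i$ and $S_{i+1}$. Hence either $N_S(w)\subseteq S_1\cup S_3$, in which case necessarily $N_S(w)\cap S_3\neq\emptyset$, or $N_S(w)\subseteq S_2\cup S_4$.

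In the first subcase I would produce a tree covering $x_1,x_2,x_3,x_4$. Pick $s_3\in N(w)\cap S_3$ and $s_2\in S_2$, $s_4\in S_4$, and consider the induced subgraph on $V(P_{a_1})\cup\{v\}\cup V(Q)\cup\{s_3\}\cup V(P_{s_3})\cup\{s_2\}\cup V(P_{s_2})\cup\{s_4\}\cup V(P_{s_4})$, using $s_3$ as the branching vertex for $x_2,x_3,x_4$. The absence of chords follows from: the minimality of $P_{a_1}$ (so $v$'s only neighbor along $P_{a_1}$ is $a_1$); the minimality of $Q$ (so its interior has no neighbor in $S\sm S_1$); triangle-freeness combined with $ws_3\in E(G)$ and $s_3$ complete to $S_2\cup S_4$ (which forces $w$ anticomplete to $S_2\cup S_4$); and the separation $N(A_i)=S_i$ that precludes edges between the various $A_i$'s. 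The delicate point is that $v$ itself could a priori be adjacent to some chosen $s_2,s_3,s_4$; this is handled by a short case split where, when $v$ is adjacent to the chosen $s_2$ (resp.\ $s_4$), one reroutes that branch through $v$ directly, e.g.\ using $v\tp s_2\tp P_{s_2}\tp x_2$, still yielding an induced tree.

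In the second subcase, if $w$ is complete to $S_2\cup S_4$ then conclusion 2 holds and we are done. Otherwise fix $s^\ast\in(S_2\cup S_4)\sm N(w)$, say $s^\ast\in S_4$. I would first attempt a tree construction analogous to the previous subcase: combine $P_{a_1},v,Q,w$, reach $x_2$ via a neighbor of $w$ in $S_2$, cross to $x_3$ through an adjacent $s_3\in S_3$, and reach $x_4$ via $s_3\tp s^\ast\tp P_{s^\ast}\tp x_4$, with $s^\ast$ specifically chosen so that $ws^\ast\notin E(G)$ prevents a cycle through $w$. If this construction is blocked by unavoidable chords (coming again from $v$'s or the interior of $Q$'s unknown neighbors), I would instead exhibit a cubic structure of $G[Z\cup\{v\}]$: the split redistributes the existing $A_i,S_i$, assigns $v$ and $w$ (and possibly interior vertices of $Q$, together with $s^\ast$) to the new corner sets $S_5,\dots,S_8$, and uses the complete adjacency of $w$ to $(S_2\cup S_4)\sm\{s^\ast\}$ together with $ws^\ast\notin E(G)$ to realise the $K_{4,4}$-minus-matching pattern required between $\{S_1,S_2,S_3,S_4\}$ and $\{S_5,S_6,S_7,S_8\}$.

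The main obstacle I anticipate is the cubic-structure case: producing an explicit $17$-tuple $(A_1',\dots,A_4',B_1',\dots,B_4',S_1',\dots,S_8',R')$ verifying all $14$ axioms, in particular ensuring connectivity of the new $A_i'$'s and the precise complete/anticomplete relations between top and bottom corner sets. The tree constructions themselves, while requiring care with $v$'s a priori uncontrolled neighbors in $S$, are otherwise routine applications of the minimality choices of $P_{a_1}$ and $Q$ together with the defining adjacencies of the square structure.
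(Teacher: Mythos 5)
Your opening step contains a genuine error that undermines the whole case analysis. You claim that $w\in Q\sm v\subseteq R$, hence $N_A(w)=\emptyset$. But the minimal path $Q$ can degenerate to $Q=v=w$ (the paper explicitly notes this possibility), and in that case $w=v$ is not in $R$ and may have neighbors anywhere in $A$. This degenerate case is not a technicality: it is where most of the substance of the claim lives. The hypothesis ``$w$ has neighbors in $(A\sm A_1)\cup(S\sm S_1)$'' is explicitly designed to include the case of a neighbor in $A_2$, $A_3$ or $A_4$, which your reduction to ``$N_S(w)\cap(S\sm S_1)\neq\emptyset$'' silently discards. In particular, your subsequent dichotomy ($N_S(w)\subseteq S_1\cup S_3$ with $N_S(w)\cap S_3\neq\emptyset$, versus $N_S(w)\subseteq S_2\cup S_4$) is not exhaustive: $v=w$ could have $N_S(w)\subseteq S_1$ or even $N_S(w)=\emptyset$ while being adjacent to $A_3$. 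Your tree constructions are also affected: when $w=v$ has neighbors inside $A_2$, $A_3$ or $A_4$, the paths $P_{s_2}$, $P_{s_3}$, $P_{s_4}$ you attach can acquire chords to $w$, which is why the paper routes those branches through trees obtained from Theorem~\ref{theo_3term} inside $G[A_i\cup\{w,s_i\}]$ rather than through fixed paths.

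The second, related, problem is that you locate the cubic-structure outcome in the wrong place. In the subcase $N_S(w)\subseteq S_2\cup S_4$ with $w$ not complete to $S_2\cup S_4$ \emph{and} $w$ having no neighbor in $A_3$, the paper always produces a tree (a non-neighbor $s_4\in S_4$ of $w$ lets one close the fourth branch through $s_3\tp s_4$ without creating a cycle); no cubic structure is needed there. The cubic structure arises precisely in the configuration you excluded at the outset: $v=w$ adjacent to $a_1\in A_1$, to $a_3\in A_3$ and to some $s_2\in S_2$, with a non-neighbor $s_4\in S_4$, and with $a_1$, $a_3$ forced to be adjacent to some $s_1\in S_1$, $s_3\in S_3$ respectively (otherwise a tree exists). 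The explicit split then puts $\{a_1\},\{s_2\},\{a_3\},\{s_4\}$ as the new $S_1,\dots,S_4$ and $\{s_3\},\emptyset,\{s_1\},\{v\}$ as $S_5,\dots,S_8$. So the missing idea is not merely bookkeeping of the $17$-tuple, as you anticipate, but the identification of which configuration actually forces the cubic structure; as written, your argument cannot reach it.
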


  \begin{proof}
    Note that possibly $Q=v=w$.  Note also that by the definition of
    $R$, $w$ have neighbors in $A$ only when $w=v$.

    \begin{claim}
     If $w$ has a neighbor in $A_2 \cup A_4$ there exists a tree that
      covers $x_1, x_2, x_3, x_4$.
    \end{claim}

    \begin{proofclaim}
      Up to symmetry $w$ has a neighbor $a_2 \in A_2$. We choose $a_2$
      subject to the minimality of $P_{a_2}$.  Note that here $Q=v=w$.

      If $v$ has also a neighbor $a_3 \in S_3 \cup A_3$ and a neighbor
      $a_4 \in S_4 \cup A_4$ (note that $G$ being triangle-free
      $a_3\in S_3$ and $a_4\in S_4$ cannot happen) then we choose
      $a_3, a_4$ subject to the minimality of respectively $P_{a_3}$,
      $P_{a_4}$. So, $V(P_{a_1} \cup Q \cup P_{a_2} \cup P_{a_3} \cup
      P_{a_4})$ induces a tree that covers $x_1, x_2, x_3, x_4$.
      Hence we may assume that $v$ has no neighbor in $S_4 \cup A_4$.

      If $v$ has a neighbor $a_3 \in S_3 \cup A_3$ then we pick
      $s_3\in S_3$ (if $a_3\in S_3$, we choose $s_3 = a_3$).  We let
      $T_3$ be a tree of $G[A_3 \cup \{v, s_3\}]$ that covers $v, s_3,
      x_3$.  Note that $T_3$ exists by Theorem~\ref{theo_3term}
      because $G[A_3 \cup \{v, s_3\}]$ is connected.  So, $V(P_{a_1}
      \cup Q \cup P_{a_2} \cup T_3 \cup P_{s_4})$ where $s_4\in
      S_4$ induces a tree that covers $x_1, x_2, x_3, x_4$.  Hence we
      may assume that $v$ has no neighbor in $S_3 \cup A_3$.

      Now, we pick $s_1 \in S_1$ and we let $T_1$ be a tree of $G[A_1
        \cup \{v, s_1\}]$ that covers $v, s_1, x_1$.  Note that $T_1$
      exists by Theorem~\ref{theo_3term} because $G[A_1 \cup \{v,
        s_1\}]$ is connected. So, $V(T_1 \cup P_{a_2} \cup
      P_{s_3} \cup P_{s_4})$ where $s_3 \in S_3,s_4\in S_4$ induces a tree that
      covers $x_1, x_2, x_3, x_4$.
   \end{proofclaim}

    So, we may assume that $w$ has no neighbor in $A_2 \cup A_4$.

    \begin{claim}
    If  $w$ has a neighbor in $S_3$ there exists a tree that
      covers $x_1, x_2, x_3, x_4$.
    \end{claim}

    \begin{proofclaim}
      Let $s_3$ be a neighbor of $w$ in $S_3$. Note that $G$ being
      triangle-free, $w$ has no neighbor in $S_2 \cup S_4$.  We let
      $T_3$ be a tree of $G[A_3 \cup \{w, s_3\}]$ that covers $w, s_3,
      x_3$.  Note that in fact $T_3$ is a path either from $s_3$ to
      $x_3$ or from $w$ to $x_3$.  So, $V(P_{a_1} \cup Q \cup P_{s_2}
      \cup T_3 \cup P_{s_4})$ where $s_2 \in S_2$, $s_4 \in S_4$
      induces a tree that covers $x_1, x_2, x_3, x_4$.
    \end{proofclaim}

    So, we may assume that $w$ has no neighbor in $S_3$.

    \begin{claim}
      \label{c:defs2}
      If $w$ has no neighbor in $S_2 \cup S_4$ there exists a tree
      that covers $x_1, x_2, x_3, x_4$.
    \end{claim}

    \begin{proofclaim}
      If $w$ has no neighbor in $S_2 \cup S_4$, by the definition of
      $Q$, $w$ must have a neighbor $a_3 \in A_3$.

      We pick $s_3 \in S_3$. We let $T_3$ be a tree of $G[A_3 \cup
        \{w, s_3\}]$ that covers $w, s_3, x_3$.  So, $V(P_{a_1} \cup Q
      \cup P_{s_2} \cup T_3 \cup P_{s_4})$ where $s_2 \in S_2$, $s_4
      \in S_4$ induces a tree that covers $x_1, x_2, x_3, x_4$.
    \end{proofclaim}

    So, we may assume that $w$ has a neighbor in $S_2 \cup S_4$ (say
    $s_2 \in S_2$ up to symmetry).

    \begin{claim}
      \label{c:defa3}
     If $w$ has no neighbor in $A_3$ then either there exists a tree
     that covers $x_1, x_2, x_3, x_4$ or $N_{S} (w) = S_2 \cup S_4$
     and $N_A(w) \subseteq A_1$.
    \end{claim}

    \begin{proofclaim}
      If $s_4 \in S_4$ is a non-neighbor of $w$, then $V(P_{a_1} \cup
      Q \cup P_{s_2} \cup P_{s_3} \cup P_{s_4})$, where $s_3 \in S_3$
      is a tree that covers $x_1, x_2, x_3, x_4$.  So, we may assume
      that $w$ is complete to $S_4$.  By the same way we may also
      assume that $w$ is complete to $S_2$.  Hence $N_{S} (w) = S_2
      \cup S_4$ and $N_A(w) \subseteq A_1$.
    \end{proofclaim}

    Note that if $N_{S} (w) = S_2 \cup S_4$ and $N_A(w) \subseteq A_1$
    then the second output of our Claim~\ref{c:squareA4} holds.  So,
    from the definition of $Q$ we may assume that $w$ has a neighbor
    $a_3$ in $A_3$.  This implies $v=w$.  We choose $a_3$ subject to
    the minimality of $P_{a_3}$.

    \begin{claim}
      \label{c:defs4}
     If $v$ has a neighbor in $S_4$ there exists a tree that
      covers $x_1, x_2, x_3, x_4$
    \end{claim}

    \begin{proofclaim}
      Let $s_4\in S_4$ be such that $vs_4 \in E(G)$. Then $V(P_{a_1} \cup v
      \cup P_{s_2} \cup P_{a_3} \cup P_{s_4})$ is a
      tree that covers $x_1, x_2, x_3, x_4$.
    \end{proofclaim}

    So, we may assume that $v$ has a non-neighbor $s_4 \in S_4$.

    Let us finish the proof of our claim. We pick $s_1 \in S_1$ and
    $s_3 \in S_3$.  Note that $vs_1\not\in E(G)$ since $G$ is
    triangle-free.  If $a_1s_1 \notin E(G)$ then $V(P_{a_1} \cup P_{s_2} \cup P_{a_3} \cup P_{s_4}) \cup \{s_1,v\}$ induces a
    tree that covers $x_1, x_2, x_3, x_4$. So we may assume $s_1a_1
    \in E(G)$.  Symmetrically, we may assume $s_3a_3 \in E(G)$.

    We observe that $V(P_{a_1} \cup P_{s_2} \cup P_{a_3} \cup P_{s_4})
    \cup \{s_1, s_3, v\}$ is a cubic structure.  A split is given by :
    $A_1 = V(P_{a_1}\sm a_1)$, $A_2 = V(P_{s_2}\sm s_2)$, $A_3 =
    V(P_{a_3}\sm a_3)$, $A_4 = V(P_{s_4}\sm s_4)$, $S_1 = \{a_1\}$
    $S_2 = \{s_2\}$, $S_3 = \{a_3\}$, $S_4 = \{s_4\}$, $S_5 =
    \{s_3\}$, $S_6 = \emptyset$, $S_7 = \{s_1\}$, $S_8 = \{v\}$, $B = \emptyset,R = \emptyset$.
  \end{proof}

  Now, let $C$ be the set of the $(S_2 \cup S_4)$-complete vertices of
  $R\cup \{v\}$.  Let $Y$ be the set of those vertices $w$ of $R \cup
  \{v\}$ such that there exists a path from $v$ to $w$ whose interior
  is in $R$.  Let $Y_1$ be the set of those vertices $w$ of $Y \sm C$
  such that there exists a path from $v$ to $w$ whose interior is in
  $R \sm C$.  Let $Y_2$ be the set of those vertices $w$ of $Y \cap C$
  such that there exists a path from $v$ to $w$ whose interior is in
  $R \sm C$.  Let $Y_3$ be $Y \sm (Y_1 \cup Y_2)$.  Note that $Y = Y_1
  \cup Y_2 \cup Y_3$.

  Note that we may assume that the only possible output of
  Claim~\ref{c:squareA4} is $N_{S} (w) = S_2 \cup S_4$ and $N_A(w)
  \subseteq A_1$.  Also no vertex of $Y$ has a neighbor in $A_2 \cup
  A_3 \cup A_4$ (for $v$ this follows from Claim~\ref{c:squareA4}, for
  the rest of $Y$ this follows from the definition of $R$).  Note that
  $v \notin Y_3$.  But $v\in Y_2$ is possible since $v$ can be
  complete to $S_2 \cup S_4$.  So $N_{Z\cup \{v\}}(Y_3) \subseteq Y_2
  \cup S$ from the definition of $R$.  Also $N_{Z\cup \{v\}}(Y_2)
  \subseteq Y_1 \cup Y_3 \cup A_1 \cup S_2\cup S_4$.  And from
  Claim~\ref{c:squareA4}, $N_{Z\cup \{v\}}(Y_1) \subseteq Y_2 \cup A_1
  \cup S_1$.

  Hence, we can put all the vertices of $Y_1$ in $A_1$, all the
  vertices of $Y_2$ in $S_1$ and leave all the vertices of $Y_3$ in
  $R$. More formally we let:
  \begin{itemize}
\item $A'_1=A_1\cup Y_1$;
\item $A'_i=A_i,\  i=2,3,4$;
\item $S'_1= S_1\cup Y_2$;
\item $S'_i=S_i,\  i=2,3,4$;
\item $R'=R\setminus(Y_1\cup Y_2)$.
\end{itemize}

    We see that $(A'_1,\ldots,A'_4,S'_1,\ldots,S'_4,R')$ 
 is a square structure of $Z\cup \{v\}$.\\

  Here is how to transform the proof above into an algorithm. We first
  compute $C$. After, we use BFS to compute $Y$.  The output of BFS is
  a rooted tree whose root is $v$.  Similarly, we compute $Y_1, Y_2,
  Y_3$.  We check whether $N_{Z\cup \{v\}}(Y_1) \subseteq Y_2 \cup A_1
  \cup S_1$.  If this is true, the paragraph above shows how to output an
  augmented square structure.  Else there is a vertex $w\in Y_1$ such
  that $w$ has neighbors in $(A\sm A_1) \cup (S\sm S_1)$.  Hence by
  backtracking the BFS tree from $w$, we find a path $Q= v \tp \cdots
  \tp w$ where $Q\sm v \subseteq R$ and such that $w$ has neighbors in
  $(A\sm A_1) \cup (S\sm S_1)$.  Moreover, the condition $N_{S} (w)
  = S_2 \cup S_4$ and $N_A(w) \subseteq A_1$ fails since $w\notin C$.
  So the proof of Claim~\ref{c:squareA4} is a description of how, by
  just checking several neighborhoods, we can find either:

  \begin{itemize}
  \item
    a tree of $G[Z \cup \{v\}]$ that covers $x_1,$ $x_2,$ $x_3,$
    $x_4$ or
  \item
    a split of the a cubic structure of $G[Z \cup \{v\}]$,
  \end{itemize}

  This completes the proof of Lemma~\ref{l:square}.

\section{Proof of Lemma~\ref{l:cube}}
\label{s:cube}
  Let $Z \subseteq V(G)$ be a cubic structure of $G$ with respect to
  $x_1, x_2, x_3, x_4$ together with a split like in the definition
  and let $v$ be in $V(G) \sm Z$.  Note that $Z$, the split of $Z$ and
  $v$ are given by assumption.

  Here below, we give a proof of the existence of the objects that the
  algorithm of our Lemma must output, namely a tree or an augmented
  cubic structure.  But like in the proof of Lemma~\ref{l:square},
  this proof is in fact the description of an $O(m)$-time algorithm.
  We omit the details of how to tranform the proof into an algorithm,
  since they are similar to those of the proof of
  Lemma~\ref{l:square}.

  When $s_i \in S_i\cup A_i$, we define the path $P_{s_i}$ to be a
  path from $s_i$ to $x_i$, whose interior is in $A_i$, $i= 1, 2, 3,
  4$.  Note that $P_{s_i}$ exists since by Item~\ref{i:cubeACon} of
  the definition of cubic structures, $G[A_i]$ is connected and by
  Item~\ref{i:cubeNAi} every vertex of $S_i$ has a neighbor in $A_i$,
  $i=1, 2, 3, 4$.

  \begin{CL}
    \label{c:cubeAi}
    The lemma holds when $v$ has neighbors in $A$.
  \end{CL}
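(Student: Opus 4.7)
The plan is to do a case analysis on $I = \{i \in \{1,2,3,4\} : v \text{ has a neighbor in } A_i\}$, which is nonempty by assumption; so WLOG $1 \in I$ and, for each $i \in I$, I fix a neighbor $a_i \in A_i \cap N(v)$ that minimizes $|P_{a_i}|$. An important preliminary observation is that if $|I| \geq 2$, then $v$ cannot be inserted into any extended cubic structure: a vertex outside $Z$ having a neighbor in $A_i$ must, in any valid split of $Z\cup\{v\}$, lie in $A_i$ itself or in $S_i = N(A_i)$ by item~\ref{i:cubeNAi}, so neighbors in two different $A_i$'s would force $v$ into two disjoint sets. Hence when $|I|\ge 2$ the only acceptable output is a tree.

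When $|I| = 4$, the graph $\{v\} \cup P_{a_1} \cup P_{a_2} \cup P_{a_3} \cup P_{a_4}$ is an induced tree covering $x_1, x_2, x_3, x_4$: the $P_{a_i}$'s lie in pairwise anticomplete $A_i$'s (item~\ref{i:cubeNAi}) so they cannot interact, and the minimality of $|P_{a_i}|$ rules out chords from $v$ to $P_{a_i}\sm\{a_i\}$. When $|I|\in\{2,3\}$ the main task, for each $i\notin I$, is to attach a bridge from $v$ (or from the partial tree) to $x_i$ without introducing cycles. Since reaching $A_i$ from outside must go through $S_i$, the natural bridges traverse vertices of $S_5\cup S_6\cup S_7\cup S_8$, each of which is complete to three of $S_1,S_2,S_3,S_4$; if $v$ has such a neighbor the extension is immediate. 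If $v$ has no such neighbor, then I use Theorem~\ref{theo_3term} inside $A_i\cup\{s_i\}$ for a suitable $s_i\in S_i$ to build the missing branch, and track $v$'s remaining neighbors (in $B$ or $R$) to complete the linkage, using triangle-freeness and the anticompleteness relations inside $\{S_1,\ldots,S_4\}$ and $\{S_5,\ldots,S_8\}$ to prevent shortcuts.

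When $|I| = 1$, say $I = \{1\}$, I attempt to extend the cubic structure by inserting $v$ into $A_1$ or into $S_1$. If $N(v)\cap Z \subseteq A_1\cup S_1$, putting $v\in A_1$ preserves every item of the definition, since $G[A_1\cup\{v\}]$ is still connected by item~\ref{i:cubeACon}, and $S_1$ is unchanged. Otherwise, $v$ has a neighbor outside $A_1\cup S_1$, and I test whether $v$ can join $S_1$: this requires $v$ to be anticomplete to $S_1\cup S_5$, complete to $S_6\cup S_7\cup S_8$, anticomplete to $S_2\cup S_3\cup S_4$, and to have its other neighbors confined to $A_1\cup B_1\cup R$ so that items~\ref{i:cubeNBi} and~\ref{i:cubeNR} remain valid. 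If neither extension works, the concrete witness of failure (a neighbor of $v$ in a forbidden location) can be combined with $P_{a_1}$ and with paths $P_{s_i}$ for suitable $s_i\in S_i$ to yield an induced tree covering $x_1,x_2,x_3,x_4$.

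The main obstacle will be the subcases of $|I|\in\{2,3\}$ in which $v$ lacks an immediate bridge through $S_5\cup\cdots\cup S_8$, and the fallback inside the $|I|=1$ case when $v$ fits neither $A_1$ nor $S_1$. In both situations, the proof must convert negative information (absence of convenient edges) into positive information (existence of an induced tree), and verifying inducedness requires tracking many potential chords. The key tools are triangle-freeness and the rigid near-bipartite pattern between $\{S_1,\ldots,S_4\}$ and $\{S_5,\ldots,S_8\}$ encoded in items~\ref{i:cubeNAi}--\ref{i:cubeNR}.
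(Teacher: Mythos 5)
There is a genuine gap, and it sits in your $|I|=1$ case. You only examine $N(v)\cap Z$ and try to place $v$ alone into $A_1$ or $S_1$, and you assert that if neither placement works, the ``witness of failure'' can be turned into a tree. That assertion is false. The problematic neighbours of $v$ are those in $B\cup R$: for instance, $v$ may be adjacent to $a_1\in A_1$ and to a single vertex $r\in R$ that is complete to $S_6\cup S_7\cup S_8$ and has no other neighbours in $Z$. Then $v$ cannot go into $A_1$ (because $r\in N(A_1\cup\{v\})\setminus S_1$, violating Item~\ref{i:cubeNAi}) and cannot go into $S_1$ (because $N(R)\subseteq S_5\cup\cdots\cup S_8$ by Item~\ref{i:cubeNR}), yet no tree covering the terminals need exist: the correct output is a \emph{new} split in which $r$ is moved into $S_1$ and $v$ into $A_1$. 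The paper's proof is built precisely around this phenomenon: it considers all paths $Q=v\tp\cdots\tp w$ with interior in $B\cup R$, shows (Claim~\arabic{section}.\arabic{DE} in the paper, their~(\ref{c:cubeAS8})-type statement) that if any such path reaches $A_2\cup A_3\cup A_4\cup S_2\cup S_3\cup S_4\cup S_5$ a tree exists, shows that any such path reaching $S_6\cup S_7\cup S_8$ must pass through a vertex complete to $S_6\cup S_7\cup S_8$ (else a tree exists), and then redistributes the whole set $Y$ of vertices of $B\cup R\cup\{v\}$ reachable from $v$ into $A_1$, $S_1$ and $B_1$ according to whether they are $(S_6\cup S_7\cup S_8)$-complete and how they are reached. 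Without this global reclassification your case analysis cannot terminate correctly.

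A secondary but still substantial issue is that your $|I|\in\{2,3\}$ cases, where you correctly observe that only a tree is an acceptable output, are exactly the technical heart of the paper's argument and you leave them as a sketch. The hardest subcase is when $v$ (or a vertex $w$ reached from $v$ through $B\cup R$) sees only $S_2\cup A_2$ besides $A_1$: there the construction must branch on whether $S_5=\emptyset$, whether the connecting path meets $S_6$, $S_7$ or $S_8$, and whether the chosen hub vertex $s_5$ or $s_7$ is itself adjacent to $w$; each of these requires a different tree, and triangle-freeness alone does not dispose of them. Saying you will ``track $v$'s remaining neighbours to complete the linkage'' does not yet constitute a proof that a tree always exists in these configurations. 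Finally, your preliminary observation that $|I|\geq 2$ forbids absorbing $v$ tacitly assumes the new split extends the old one (the definition allows an entirely different partition), so it is motivation rather than proof; the real obligation, discharged at length in the paper, is to exhibit the tree.
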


  \begin{proof}
    For suppose that $v$ has a neighbor in $A$, say $a_1 \in A_1$ (the
    cases with a neighbor in $A_2, A_3, A_4$ are symmetric).  We
    chooose $a_1$ subject to the minimality of $P_{a_1}$.

    \begin{claim}
      \label{c:cubeAS8}
      If there exists a path $Q= v \tp \cdots \tp w$ where $Q\sm v
      \subseteq B \cup R$ and such that $w$ has neighbors in $A_2
      \cup A_3 \cup A_4 \cup S_2 \cup S_3 \cup S_4 \cup S_5$
      then there exists a tree of $G[Z \cup \{v\}]$ that covers $x_1,$
      $x_2,$ $x_3,$ $x_4$.
    \end{claim}

    \begin{proofclaim}
      Let $Q$ be such a path, minimal with respect to its
      properties.  Note that possibly $v=w$.

      If $w$ is adjacent to $a_2 \in S_2 \cup A_2$, $a_3 \in S_3 \cup
      A_3$ and $a_4 \in S_4 \cup A_4$ then we choose $a_2, a_3, a_4$
      subject to the minimality of $P_{a_2}, P_{a_3}, P_{a_4}$.  So,
      $V(P_{a_1} \cup Q \cup P_{a_2} \cup P_{a_3} \cup P_{a_4})$
      induces a tree of $G$ that covers $x_1, x_2, x_3, x_4$.  Hence,
      by symmetry, we may assume that $w$ has no neighbor in $S_4 \cup
      A_4$.

      If $w$ is adjacent to $a_2 \in S_2 \cup A_2$, $a_3 \in S_3 \cup
      A_3$ then $v=w$ because no vertex in $B\cup R$ can have
      neighbors in both $S_2\cup A_2$, $S_3 \cup A_3$ by
      Items~\ref{i:cubeNBi}, \ref{i:cubeNR}.  We suppose that $a_2,
      a_3$ are chosen subject to the minimality of $P_{a_2}, P_{a_3}$.
      Let $T_2$ be a tree of $G[A_2\cup \{v, s_2\}]$ that covers $x_2,
      v, s_2$ where $s_2$ is some vertex of $S_2$ (if $a_2 \in S_2$ we
      choose $s_2=a_2$).  Note that $T_2$ exists by
      Theorem~\ref{theo_3term} because $G[A_2 \cup \{v, s_2\}]$ is
      connected.  One of $S_6, S_7$ is non-empty by
      Item~\ref{i:cubeNe} of the definition, and we may assume $S_7
      \neq \emptyset$ because of the symmmetry between $S_2, S_7$ and
      $S_3, S_6$.  So, $V(P_{a_1} \cup T_2 \cup P_{a_3} \cup
      P_{s_4}) \cup \{s_7\}$ where $s_4 \in S_4$, $s_7 \in S_7$ is a
      tree that covers $x_1, x_2, x_3, x_4$, except when $vs_7 \in
      E(G)$.  But then, $V(P_{a_1} \cup Q \cup P_{a_2} \cup P_{a_3}
      \cup P_{s_4}) \cup \{s_7\}$ is tree that covers $x_1, x_2, x_3,
      x_4$, because $a_2 \in S_2$ would entail the triangle $a_2s_7w$.
      Hence, by symmetry, we may assume that $w$ has no neighbor in
      $S_3 \cup A_3$.

      If $w$ is adjacent to $a_2 \in S_2 \cup A_2$ then chose $a_2$ subject to the minimality of $P_{a_2}$.  Suppose first
      that some vertex of $Q$ has a neighbor $s_6 \in S_6$.  Then
      $G[A_1 \cup Q \cup S_2 \cup A_2 \cup \{s_6\}]$ is connected, so
      it contains a tree $T_6$ that covers $x_1, x_2, s_6$.  We
      observe that $V(T_6 \cup P_{s_3} \cup P_{s_4})$ where $s_3 \in
      S_3$, $s_4 \in S_4$ is a tree that covers $x_1, x_2, x_3, x_4$.
      Hence we assume from here on that no vertex of $Q$ has a
      neighbor in $S_6$.  Let $T_2$ be a tree of $G[A_2\cup \{s_2,
        w\}]$ that covers $x_2, w, s_2$ where $s_2$ is some vertex of
      $S_2$ (if $a_2 \in S_2$ we choose $s_2=a_2$).  Suppose now that
      $S_5\neq \emptyset$.  We observe that $V(P_{a_1} \cup Q \cup T_2
      \cup P_{s_3} \cup P_{s_4})\cup \{s_5\}$ where $s_3\in S_3$, $s_4
      \in S_4$, $s_{5}\in S_5$, is a tree of $G$ that covers $x_1,
      x_2, x_3, x_4$ except when $ws_{5} \in E(G)$.  But in this case
      we observe that $V(P_{a_1} \cup Q \cup P_{a_2} \cup P_{s_3} \cup
      P_{s_4})\cup \{s_5\}$ is a tree of $G$ that covers $x_1, x_2,
      x_3, x_4$.  Hence, we may assume that $S_5 = \emptyset$ and by
      Item~\ref{i:cubeNe} of the definition we have $S_6, S_7, S_8
      \neq \emptyset$.  If no vertex of $Q$ has a neighbor in $S_7
      \cup S_8$ then $V(P_{a_1} \cup Q \cup T_2 \cup P_{s_3} \cup
      P_{s_4})\cup \{s_7, s_8\}$ where $s_7 \in S_7$, $s_{8}\in S_8$,
      is a tree of $G$ that covers $x_1, x_2, x_3, x_4$.  So we may
      assume that some vertex of $Q$ has a neighbor in $S_7 \cup S_8$
      and we let $u$ be the vertex of $Q$ closest to $v$ that has one
      neighbor in $S_7 \cup S_8$, say $s_7\in S_7$ (the case with one
      neighbor in $S_8$ is similar because of the symmetry between
      $S_7, S_4$ and $S_8, S_3$).  Let $s_2\in S_2$.  So $V(P_{a_1}
      \cup (v \tp Q \tp u) \cup P_{s_2} \cup P_{s_3} \cup P_{s_4})
      \cup \{s_6, s_7\}$ is a tree of $G$ that covers $x_1, x_2, x_3,
      x_4$ except when some vertex of $v \tp Q \tp u$ has a neighbor
      in $P_{s_2}$. But then, by the minimality of $Q$, we must have
      $u=w$.  Now since $G$ is triangle-free, $a_2\notin S_2$. So,
      $V(P_{a_1} \cup Q \cup P_{a_2} \cup P_{s_3} \cup P_{s_4}) \cup
      \{s_6, s_7\}$ is a tree of $G$ that covers $x_1, x_2, x_3, x_4$.
      Hence we may assume that $w$ has no neighbor in $S_2 \cup A_2$.

      Now $w$ has no neighbors in $S_2 \cup S_3 \cup S_4 \cup A_2 \cup
      A_3 \cup A_4$.  So $w$ must have a neighbor $s_5 \in S_5$.
      Hence, $V(P_{a_1} \cup Q \cup P_{s_2} \cup P_{s_3} \cup P_{s_4})
      \cup \{s_5\}$ where $s_2 \in S_2$, $s_3 \in S_3$, $s_4 \in S_4$
      is a tree that covers $x_1, x_2, x_3, x_4$.
    \end{proofclaim}

    \begin{claim}
      \label{c:cubeAComp}
      If there exists a path $Q= v \tp \cdots \tp w$ whose interior is
      in $B \cup R$ and such that $w$ has neighbors in $S_6 \cup S_7
      \cup S_8$ then either $Q$ contains a vertex that is complete to
      $S_6 \cup S_7 \cup S_8$ or there exists a tree of $G[Z \cup
        \{v\}]$ that covers $x_1, x_2, x_3, x_4$.
    \end{claim}

    \begin{proofclaim}
      Let $Q$ be such a minimal path.  It suffices to prove that
      $w$ is complete to $S_6 \cup S_7 \cup S_8$ or that a tree
      covering $x_1, x_2, x_3, x_4$ exists.  By~(\ref{c:cubeAS8}), we
      may assume that no vertex of $Q$ has a neighbor in $A_2 \cup A_3
      \cup A_4 \cup S_2 \cup S_3 \cup S_4 \cup S_5$.  So up to the
      symmetry between $S_6, S_7, S_8$, we may assume that $w$ has a
      non-neighbor $s_6 \in S_6$ and a neighbor $s_7 \in S_7$ for
      otherwise our claim is proved.  Hence $V(P_{a_1} \cup Q \cup
      P_{s_2} \cup P_{s_3} \cup P_{s_4}) \cup \{s_6, s_7\}$ is a tree
      that covers $x_1, x_2, x_3, x_4$.
    \end{proofclaim}

    Now, let $C$ be the set of the $(S_6 \cup S_7 \cup S_8)$-complete
    vertices of $Z\cup \{v\}$.  Let $Y$ be the set of these vertice
    $w$ of $B \cup R \cup \{v\}$ such that there exists a path from
    $v$ to $w$ whose interior is in $B \cup R$.  Let $Y_1$ be the set
    of these vertices $w$ of $Y \sm C$ such that there exists a path from
    $v$ to $w$ whose interior is in $(B \cup R) \sm C$.  Let $Y_2$ be
    the set of these vertices $w$ of $Y\cap C$ such that there exists a
    path from $v$ to $w$ whose interior is in $(B \cup R)\sm C$.  Let
    $Y_3$ be $Y \sm (Y_1 \cup Y_2)$.  Note that $Y = Y_1 \cup Y_2 \cup
    Y_3$.

    By~(\ref{c:cubeAS8}), we may assume that no vertex of $Y$ has a
    neighbor in $A_2 \cup A_3 \cup A_4 \cup S_2 \cup S_3 \cup S_4 \cup
    S_5$.  Note that $v \notin Y_3$.  But $v\in Y_2$ is possible since
    $v$ can be complete to $(S_6 \cup S_7 \cup S_8)$.  So
    by~(\ref{c:cubeAComp}), $N_{Z\cup \{v\}}(Y_3) \subseteq Y_2 \cup
    S_1$.  Also by~(\ref{c:cubeAComp}),
    $N_{Z\cup \{v\}}(Y_2) \subseteq Y_1 \cup Y_3 \cup A_1 \cup S_6 \cup S_7 \cup S_8$. And
    $N_{Z\cup \{v\}}(Y_1) \subseteq Y_2 \cup A_1 \cup S_1$.

    Hence, we can put all the vertices of $Y_1$ in $A_1$, all the
    vertices of $Y_2$ in $S_1$ and all the vertices of $Y_3$ in $B_1$.
    We obtain a split of the cubic structure $Z\cup \{v\}$.
  \end{proof}

  \begin{CL}
    \label{c:cubevCS}
    The lemma holds if $v$ is complete to $(S_1 \cup S_2 \cup S_3 \cup
    S_4) \sm S_i$, $i=1, 2, 3, 4$.
  \end{CL}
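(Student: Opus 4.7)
The plan is to use the symmetry of the cubic structure among $S_1, S_2, S_3, S_4$ to assume $i = 1$, so that $v$ is complete to $S_2 \cup S_3 \cup S_4$. By Claim~\ref{c:cubeAi} I may further assume $v$ has no neighbor in $A$. A first observation is that $v$ is then anticomplete to $S_5 \cup S_6 \cup S_7 \cup S_8$: for each $s \in S_k$ with $k \in \{5, 6, 7, 8\}$, Item~7 of the definition makes $s$ complete to every $S_j$ with $j \in \{1, 2, 3, 4\} \sm \{k - 4\}$, in particular for some $j \in \{2, 3, 4\}$, so that an edge $vs$ would close a triangle with any $s_j \in S_j$ via the edges $vs_j$ and $s_js$.

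I would then branch on how $v$ can reach $A_1$. If $v$ has a neighbor $s_1 \in S_1$, pick $s_j \in S_j$ arbitrarily for $j = 2, 3, 4$; the set $V(P_{s_1} \cup P_{s_2} \cup P_{s_3} \cup P_{s_4}) \cup \{v\}$ then induces a tree covering $x_1, x_2, x_3, x_4$, as can be checked using Item~9 (pairwise anticompleteness of the $S_j$'s), Item~11 (the $A_j$'s are pairwise anticomplete and anticomplete to each $S_k$ with $k \neq j$), and the first paragraph. If $v$ has no neighbor in $S_1$ but some $b_1 \in B_1 \cap N(v)$ does, the analogous tree $V(P_{s_1} \cup P_{s_2} \cup P_{s_3} \cup P_{s_4}) \cup \{v, b_1\}$ works, with the addition of $N(B_1) \subseteq S_1 \cup S_6 \cup S_7 \cup S_8$ ruling out unwanted edges from $b_1$ to $A$ or to $S_2 \cup S_3 \cup S_4$.

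In the remaining case, neither $v$ nor any $b_1 \in B_1 \cap N(v)$ has a neighbor in $S_1$, and the plan is to augment $Z$ by placing $v$ into $S_5$ and relocating each $b_1 \in B_1 \cap N(v)$ into $R$. Because $S_1$ is complete to $S_6 \cup S_7 \cup S_8$ by Item~7, triangle-freeness forbids any $B_1$-vertex from having neighbors in both $S_1$ and $S_6 \cup S_7 \cup S_8$; together with the case hypothesis this means each such $b_1$ has its entire $Z$-neighborhood in $S_6 \cup S_7 \cup S_8$, which makes its move into $R$ legal. Setting $S'_5 := S_5 \cup \{v\}$, $R' := R \cup (B_1 \cap N(v))$, $B'_1 := B_1 \sm N(v)$, and keeping all other sets unchanged, the main obstacle is the systematic verification of the fourteen cubic-structure axioms for this new split on $G[Z \cup \{v\}]$. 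The nontrivial checks are: Item~7 (uses $v$ complete to $S_2 \cup S_3 \cup S_4$ and, by the case hypothesis, anticomplete to $S_1$); Items~8 and~10 (use the case hypothesis together with the anticompleteness of $v$ to $S_5 \cup S_6 \cup S_7 \cup S_8$); Item~12, which for $B'_1$ is inherited from $B_1$ and holds because $v \notin N(B'_1)$ by construction, and which for $B_2, B_3, B_4$ uses $v \in S'_5 \subseteq N_S(S_j)$; and Item~13 for $R'$, where each moved $b_1$ has all its neighbors in $S'_5 \cup S_6 \cup S_7 \cup S_8$ and each old $R$-vertex gains at most $v \in S'_5$ as a new neighbor. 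Items~11 and~14 are immediate because the sets $A_j$ and their induced subgraphs are untouched.
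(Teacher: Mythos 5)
Your overall strategy is the paper's: after reducing (via Claim~\ref{c:cubeAi}) to the case where $v$ has no neighbour in $A$, you place $v$ in the set $S_{i+4}$ opposite to $S_i$ and repair the split by pushing the obstructing part of $B_i$ into $R$. The preliminary observations and the two trees you exhibit (when $v$, or a neighbour $b_1\in B_1\cap N(v)$, sees $S_1$) are correct. The gap is in the repair step. Nothing in the definition of a cubic structure forbids edges inside $B_1$ (Item~\ref{i:cubeNBi} only constrains $N(B_1)$, which by the paper's convention consists of vertices \emph{outside} $B_1$), so a vertex $b_1\in B_1\cap N(v)$ may well have a neighbour $b_1'\in B_1\sm N(v)$. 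Your assertion that such a $b_1$ ``has its entire $Z$-neighbourhood in $S_6\cup S_7\cup S_8$'' overlooks exactly these internal neighbours. After setting $R'=R\cup(B_1\cap N(v))$ and $B_1'=B_1\sm N(v)$, the edge $b_1b_1'$ violates both Item~\ref{i:cubeNR} (since $b_1'\in N(R')$ but $b_1'\notin S_5\cup S_6\cup S_7\cup S_8$) and Item~\ref{i:cubeNBi} (since $b_1\in N(B_1')$ but $b_1\notin S_1\cup N_S(S_1)$).

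The missing idea is to move whole connected components of $G[B_1]$ that are reachable from $v$, not just the first neighbourhood of $v$. Doing so forces a case your analysis does not see: a path $Q=v\tp\cdots\tp w$ of length at least two whose interior lies in $B_1$, with $w$ having a neighbour $s_1\in S_1$. Your split is invalid precisely when such a path exists, and in that situation one must instead output the tree induced by $V(P_{s_1}\cup Q\cup P_{s_2}\cup P_{s_3}\cup P_{s_4})$ --- this is the paper's subclaim~(\ref{c:cubeS4}), stated for arbitrary paths through $B\cup R$ rather than for single vertices. Only when no such path exists do the reachable components of $G[B_1]$ have no neighbour in $S_1$ at all, so that they can legally be moved wholesale into $R$. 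With that correction your argument coincides with the paper's proof.
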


  \begin{proof}
    We prove the claim when $i=4$, the other cases are symmetric.  So
    $v$ is complete to $S_1 \cup S_2 \cup S_3$.

    \begin{claim}
      \label{c:cubeS4}
      If there exists a path $Q = v \tp \cdots \tp w$ such that $V(Q
      \sm v) \subseteq B \cup R$ and $w$ has a neighbor $s_4$ in $S_4$
      then $G[Z \cup \{v\}]$ contains a tree that covers $x_1, x_2,
      x_3, x_4$.
    \end{claim}

    \begin{proofclaim}
      Let us consider such a path $Q$ minimal with respect its
      properties.  Every vertex of $Q \sm v$ is in $B_4$.  Indeed,
      $B_4$ is the only set among $B_1, \dots, B_4, R$ that allows
      neighbors in $S_4$, and there are no edges between the sets
      $B_1, \dots, B_4, R$.  Hence by the properties of $B_4$, no
      vertex of $Q \sm v$ can have neighbors in $S_1 \cup S_2 \cup
      S_3$.  So, $V(P_{s_4} \cup Q \cup P_{s_1} \cup P_{s_2} \cup
      P_{s_3})$ where $s_1 \in S_1$, $s_2 \in S_2$, $s_3 \in S_3$,
      induces a tree that covers $x_1, x_2, x_3, x_4$.
    \end{proofclaim}

    Let $Y$ be the set of these vertices $w$ of $B \cup R$ such that
    there exists a path $Q = v \tp \cdots \tp w$ whose interior is in
    $B\cup R$.  If $v$ has some neighbors in $B_4$ then
    by~(\ref{c:cubeS4}) we may assume that every component of $G[Y
      \cap B_4]$ contains no neighbors of vertices of $S_4$.  So,
    every such component can be taken out of $B_4$ and put in $R$
    instead.  Then we may put $v$ in $S_8$ and we obtain a split of
    the cubic structure $Z \cup \{v\}$.
  \end{proof}

  \begin{CL}
    \label{c:cubeBS}
    The lemma holds.
  \end{CL}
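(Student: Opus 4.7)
By Claims~\ref{c:cubeAi} and~\ref{c:cubevCS}, we may assume that $v$ has no neighbor in $A$ and that $v$ is not complete to $(S_1 \cup S_2 \cup S_3 \cup S_4) \sm S_i$ for any $i \in \{1, 2, 3, 4\}$. Equivalently, $v$ is complete to at most two of $S_1, S_2, S_3, S_4$, so in at least two of these four sets we can always pick a non-neighbor of $v$. Throughout the plan, when we speak of ``$s_i$'' we mean some vertex of $S_i$ chosen to be a neighbor or a non-neighbor of $v$, as needed.

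The first step is to handle the case where $v$ has neighbors in at least two of $S_1, S_2, S_3, S_4$. Up to symmetry, let $s_1 \in N(v) \cap S_1$ and $s_2 \in N(v) \cap S_2$. The goal is to build an induced tree covering $x_1, x_2, x_3, x_4$ of the form $\{v\} \cup V(P_{s_1}) \cup V(P_{s_2}) \cup V(P_{s_3}) \cup V(P_{s_4})$, possibly augmented by a single connector $c \in S_5 \cup S_6 \cup S_7 \cup S_8$ linking the part reaching $x_4$ (and sometimes the one reaching $x_3$) to the rest. The non-completeness assumption lets us choose $s_3, s_4$ as non-neighbors of $v$ in the two sets of $\{S_1, \ldots, S_4\}$ where non-neighbors exist, and item~\ref{i:cubeNe} guarantees that at least three of $S_5, \ldots, S_8$ are non-empty, giving enough freedom to pick a connector $c$ whose adjacencies to the chosen $s_i$'s (governed by the completeness relations between $S_i$ and $S_j$ for $j \in \{5, 6, 7, 8\}$) do not create spurious edges inside the candidate tree. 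When $v$ has neighbors in three of $S_1, \ldots, S_4$, a similar but subtler construction works; when $v$ has neighbors in all four, the star centered at $v$ with branches $P_{s_i}$ is itself an induced tree.

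In the remaining case, $v$ has neighbors in at most one of $S_1, \ldots, S_4$, say only in $S_1$ (or in none of them). If $v$ has a neighbor $s_5 \in S_5$, then triangle-freeness together with the completeness of $S_5$ to $S_2 \cup S_3 \cup S_4$ forbids $v$ from having any neighbor in $S_2 \cup S_3 \cup S_4$, and the tree $\{v, s_5\} \cup V(P_{s_1}) \cup V(P_{s_2}) \cup V(P_{s_3}) \cup V(P_{s_4})$ with any $s_i \in S_i$ for $i = 2, 3, 4$ is induced and covers the four terminals. Otherwise $N_S(v) \subseteq S_1 \cup S_6 \cup S_7 \cup S_8$ (respectively $N_S(v) \subseteq S_5 \cup S_6 \cup S_7 \cup S_8$ when $v$ has no neighbor at all in $S_1 \cup S_2 \cup S_3 \cup S_4$), which matches the allowed neighborhood of $B_1$ (respectively $R$) by items~\ref{i:cubeNBi} and~\ref{i:cubeNR}. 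We then imitate the BFS argument used in the proof of Claim~\ref{c:cubevCS}: we compute the set of vertices reachable from $v$ through paths with interior in $B \cup R$, and we show that either some such path reaches a vertex $w$ with a rich enough neighborhood that applying the first step above at $w$ produces a tree, or every reachable vertex can be redistributed between $B_1$ (resp.\ $R$) and $S_1$ while preserving all axioms of a cubic structure, yielding a split of $Z \cup \{v\}$.

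The main obstacle is the cycle-avoidance in the tree construction of the first step when $v$ is complete to one or two of the sets $S_i$ with $i \in \{1, 2, 3, 4\}$. In those situations some of the $s_i$'s that must appear in the tree are forced to be neighbors of $v$, and the connector $c \in S_5 \cup \cdots \cup S_8$ is then forced by its completeness relations to be adjacent to several of these $s_i$'s, which threatens to close a short cycle. Resolving this requires a careful case analysis of which subset of $\{5, 6, 7, 8\}$ can supply a safe connector, driven by item~\ref{i:cubeNe}, together with strategically swapping some $s_i$'s for non-neighbors of $v$, which is enabled precisely by the assumption that $v$ is complete to at most two of $S_1, \ldots, S_4$.
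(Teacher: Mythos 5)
Your proposal follows the same overall strategy as the paper (reduce via Claims~\ref{c:cubeAi} and~\ref{c:cubevCS}, then either build a tree or absorb the set of vertices reachable from $v$ through $B\cup R$ into one part of the split), but it has two genuine gaps. First, the central tree construction --- the case where the new material attaches to two of $S_1,\dots,S_4$ --- is not actually carried out: you state the ``main obstacle'' of choosing a connector in $S_5\cup\dots\cup S_8$ without closing a cycle and defer it to ``a careful case analysis'' that you do not perform. That analysis is precisely the content of the paper's claim~(\ref{c:cubeS123}), and it is not routine: one must first dispose of the three- and four-set cases (using Claim~\ref{c:cubevCS}, Item~\ref{i:cubeNe} and triangle-freeness to find a connector $s_6$ avoiding $v$) in order to assume the ends of the configuration see no vertex of $S_3\cup S_4$, and only then can the two-set case be closed with a single connector $s\in S_5\cup S_6$, checked against the whole connecting path rather than against $v$ alone.

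Second, and more importantly, your last step misidentifies the dangerous configuration. Since $N(B_i)\subseteq S_i\cup N_S(S_i)$ and $N(R)\subseteq S_5\cup\dots\cup S_8$, the sets $B_1,\dots,B_4,R$ are pairwise anticomplete, so every component of $G[B\cup R]$ lies inside a single one of them; the only way the reachable set $Y$ can attach to two incompatible parts of $S$ is via a path $u\tp\cdots\tp w$ passing through $v$ whose two \emph{ends} see $S_i$ and $S_j$ respectively, for $1\le i<j\le 4$ or $(i,j)$ among $(1,5),(2,6),(3,7),(4,8)$. In such a configuration no single vertex need have a ``rich'' neighborhood --- for instance $v$ may have no neighbor in $S$ at all while being adjacent to some $b_1\in B_1$ seeing $S_1$ and some $b_2\in B_2$ seeing $S_2$ --- so ``applying the first step at $w$'' does not apply, yet $Y$ fits in no single $B_i$ or $R$ and a tree must be produced using the entire path as connector. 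This two-ended case, which is the actual statement of~(\ref{c:cubeS123}), is absent from your plan; note also that the pairs $(1,5),\dots,(4,8)$ must be treated (an attachment to $S_1$ joined through $Y$ to an attachment to $S_5$ also forces a tree), and that in the final redistribution the whole of $Y$ goes into exactly one of $B_1,\dots,B_4,R$ --- no vertex of $Y$ is put into $S_1$ in this claim (that happens only in Claim~\ref{c:cubeAi}, where $v$ has a neighbor in $A$).
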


  \begin{proof}
    By Claim~\ref{c:cubeAi} we may assume that $v$ has no neighbor in
    $A$.

    \begin{claim}
      \label{c:cubeS123}
      For the pairs $(i, j)$ such that $1 \leq i < j \leq 4$ and the
      pairs $(i, j)$ among $(1, 5)$, $(2, 6)$, $(3, 7)$, $(4, 8)$ the
      statement below is true:

      If there exists a path $Q = u \tp \cdots \tp w$ of $G[B \cup R
        \cup \{v\}]$ such that $u$ has a neighbor
      in $S_i$ and $w$ has a neighbor in $S_j$ then the lemma holds.
    \end{claim}

    \begin{proofclaim}
      Let us choose such a pair $(i, j)$ and such a path $Q$, subject
      to the minimality of $Q$. Note that by the definition of a cubic structure, $V(Q)\subseteq B\cup R$ is impossible. So, $Q$ contains $v$.

      If $u$ is adjacent to $s_1 \in S_1, s_2\in S_2, s_3 \in S_3, s_4
      \in S_4$ then $Q=u=v$ by the minimality of~$Q$. So, $V(P_{s_1}
      \cup P_{s_2} \cup P_{s_3} \cup P_{s_4} \cup Q)$ induces a tree
      that covers $x_1, x_2, x_3, x_4$.  Hence, we may assume that $u$
      (and symmetrically $w$) has no neighbor in $S_4$.

      If $u$ is adjacent to $s_1 \in S_1, s_2\in S_2, s_3 \in S_3$
      then $Q =u = v$ by the minimality of $Q$.  By
      Claim~\ref{c:cubevCS} we may assume that $v$ is not complete to
      $S_1 \cup S_2 \cup S_3$, so $v$ has a non-neighbor in $S_1 \cup
      S_2 \cup S_3$, say $s_1 \in S_1$.  Let $s_2 \in S_2, s_3 \in
      S_3$ be neighbors of $v$.  By Item~\ref{i:cubeNe} of the
      definition, we have $S_6 \cup S_7 \neq \emptyset$, so up to the
      symmetry between $S_2, S_7$ and $S_3, S_6$ we may assume that
      there exists $s_6\in S_6$.  Note that $vs_6 \notin E(G)$
      because $G$ is triangle-free.  So $V(P_{s_1} \cup P_{s_2} \cup
      P_{s_3} \cup P_{s_4} \cup Q) \cup \{s_6\}$ induces a tree that
      covers $x_1, x_2, x_3, x_4$.  So, we may assume that $u$ (and
      symmetrically $w$) has no neighbor in $S_3$.

      If $(i, j)$ is such that $1\leq i < j \leq 4$ then up to
      symmetry, $u$ has a neighbor in $s_1 \in S_1$ and $w$ has a
      neighbor $s_2 \in S_2$.  No vertex of $Q$ has neighbors in $S_5
      \cup S_6$ because such a vertex would form a triangle or would
      contradict the minimality of $Q$.  Also no vertex of $Q$ has
      neighbors in $S_3 \cup S_4$.  Indeed for $u, w$ this follows
      from the preceeding paragraphs, and for the interior vertices of
      $Q$, it follows from the minimality of $Q$.  So, $V(P_{s_1} \cup
      P_{s_2} \cup P_{s_3} \cup P_{s_4} \cup Q) \cup \{s\}$ where
      $s\in S_5\cup S_6$ is a tree that covers $x_1, x_2, x_3, x_4$.

      If $u$ has a neighbor $s_1 \in S_1$ and $w$ has a neighbor $s_5
      \in S_5$, then no vertex of $Q$ has neighbors in $S_2 \cup S_3
      \cup S_4$.  Indeed, such a vertex would form a triangle or would
      contradict the minimality of $Q$.  So, $V(P_{s_1} \cup P_{s_2}
      \cup P_{s_3} \cup P_{s_4} \cup Q) \cup \{s_5\}$ induces a tree
      that covers $x_1, x_2, x_3, x_4$.  Similarly, we can prove that
      our claim holds when $(i, j)$ is one of $(2, 6)$, $(3, 7)$, $(4,
      8)$.
    \end{proofclaim}

    Let $Y$ be the set of these vertice $u$ of $B \cup R \cup \{v\}$
    such that there exists a path from $v$ to $u$ whose interior is in
    $B \cup R$.  From~(\ref{c:cubeS123}) it follows that $N_Z(Y)$ is
    included in either $S_1 \cup N_S(S_1)$, $S_2 \cup N_S(S_2)$, $S_3
    \cup N_S(S_3)$, $S_4 \cup N_S(S_4)$ or $S_5 \cup S_6 \cup S_7 \cup
    S_8$.  So, respectively to these cases, $Y$ can be put in either
    $B_1$, $B_2$, $B_3$, $B_4$ or $R$, and we obtain a split of the
    cubic structure $Z \cup \{v\}$.
 \end{proof}

This completes the proof of Lemma~\ref{l:cube}.

\section{NP-completeness of four-in-a-centered-tree}
\label{s:npc}

The NP-completeness of four-in-a-centered-tree follows directly from
the fact (proved by Bienstock~\cite{bienstock:evenpair}) that the
problem of detecting an induced cycle passing through two prescribed
vertices of a graph is NP-complete.  In fact, the NP-completeness
result of Bienstock remains true for several classes of graphs where
some induced subgraphs are forbidden.  In~\cite{leveque.lmt:detect},
L\'ev\^eque, Lin, Maffray and Trotignon study the kinds of graph that
can be forbidden.  We use one of their result.  When $k\geq 3$, we
denote by $C_k$ the cycle on $k$ vertices.

\begin{theorem}[see \cite{leveque.lmt:detect}]
  \label{th:npLin}
  Let $k\geq 3$ be an integer.  Then the following problem is
  NP-complete:
  \begin{description}
  \item
    {\sc Instance:} two vertices $x,y$ of degree~2 of a graph $G$ that
    does not contain $C_3, \dots, C_k$.
  \item
    {\sc Question:} does $G$ contain an induced cycle covering $x,
    y$~?
  \end{description}
\end{theorem}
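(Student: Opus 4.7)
The plan is a polynomial reduction from Bienstock's theorem \cite{bienstock:evenpair}, which states that deciding whether a graph contains an induced cycle through two prescribed vertices is NP-complete. Membership in NP is immediate, since an induced cycle through $x, y$ is a polynomial-size certificate that can be verified in polynomial time. All the work is in the hardness reduction, and the natural tool is \emph{edge subdivision}: long enough subdivisions destroy short cycles without disturbing the induced-cycle structure of the graph.

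Fix $k \geq 3$ and set $L = k$. Given an instance $(H, a, b)$ of Bienstock's problem, I would first preprocess so that both $a$ and $b$ have degree~$2$ in the input. This can be achieved by a small gadget that attaches to $a$ a short structure whose only nontrivial vertex, say $\tilde a$, has degree exactly~$2$ and such that induced cycles of the original graph through $a$ correspond bijectively to induced cycles of the enlarged graph through $\tilde a$; the analogous construction is applied at $b$.

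Next I would construct $G$ from this preprocessed graph $H'$ by replacing every edge $uv$ of $H'$ by an induced path of length $L$ from $u$ to $v$, using $L-1$ fresh interior vertices that are pairwise disjoint across distinct edges. The distinguished vertices $\tilde a, \tilde b$ keep their degree~$2$ in $G$, and every interior vertex of a subdivision path has degree~$2$ as well. The correctness argument rests on the standard observation that any induced cycle $C$ of $G$ must traverse each subdivision path either entirely or not at all: if $C$ contains an interior vertex $w$ of some subdivision path, then $C$ must contain both neighbors of $w$ in $G$, and iterating along the path forces $C$ to contain the whole path. Contracting subdivision paths therefore gives a bijection between induced cycles of $G$ and induced cycles of $H'$ that preserves the pair of endpoints, so $G$ contains an induced cycle through $\tilde a, \tilde b$ iff $H'$ does. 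Since every induced cycle of $G$ has length at least $3L \geq 3k > k$, the graph $G$ contains no $C_3, \dots, C_k$, as required.

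The main obstacle is the degree-$2$ condition on the two distinguished vertices: plain subdivision preserves vertex degrees, so Bienstock's result (which allows $a, b$ to have arbitrary degree) does not deliver this for free. Designing the preprocessing gadget so that it (i) reduces the distinguished vertex to degree~$2$, (ii) induces a degree-preserving bijection on the relevant induced cycles, and (iii) does not introduce short cycles once the subsequent subdivision is applied, is the one point of the proof that requires genuine care. The rest is routine: polynomial running time of the construction is clear from $|V(G)| = O(L \cdot |E(H')|)$, and the two-way cycle correspondence is straightforward once the traverse-entirely-or-not-at-all principle is established.
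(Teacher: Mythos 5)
First, note that the paper does not prove Theorem~\ref{th:npLin} at all: it is imported verbatim from \cite{leveque.lmt:detect}, where it is obtained by redoing Bienstock's 3-SAT reduction with gadgets that are built from long paths from the start, so that the correctness argument is verified directly on a graph of large girth. So there is no in-paper proof to compare against, and your attempt must stand on its own. It does not: the central claim of your reduction is false.

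The gap is the assertion that contracting the subdivision paths gives a bijection between induced cycles of $G$ and \emph{induced} cycles of $H'$. Only one direction holds. If $C'$ is \emph{any} cycle of $H'$ (induced or not), then the union of the subdivision paths of its edges is an \emph{induced} cycle of $G$: a chord $uv$ of $C'$ in $H'$ is replaced in $G$ by a path whose interior vertices lie outside the expanded cycle, so $u$ and $v$ are no longer adjacent in $G$ and the chord simply disappears. (Concretely, the non-induced $4$-cycle of $K_4$ becomes an induced cycle after the subdivision.) Conversely, an induced cycle of $G$ contracts to a cycle of $H'$ that need not be induced. Hence your graph $G$ contains an induced cycle through $\tilde a,\tilde b$ if and only if $H'$ contains \emph{some} cycle through $a,b$, i.e.\ two internally disjoint $a$--$b$ paths --- a condition decidable in polynomial time by Menger/max-flow. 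Your reduction therefore reduces Bienstock's problem to a polynomial-time-solvable question and establishes nothing. The same objection kills the idea at the level of Bienstock's construction itself: its correctness relies on specific adjacencies (edges whose sole purpose is to forbid an induced cycle from using two inconsistent literal paths simultaneously), and subdividing those edges removes exactly the constraints that encode satisfiability. This is precisely why \cite{leveque.lmt:detect} must rebuild the gadgets with large girth and re-prove both directions of the equivalence, rather than postprocess Bienstock's graph by subdivision. The preprocessing gadget for the degree-$2$ condition, which you flag as the delicate point, is in fact the easy part; the hard part is the one you treat as routine.
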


We deduce easily:

\begin{theorem}
  Let $k\geq 3$ be an integer.  Then the following problem is
  NP-complete:
  \begin{description}
  \item
    {\sc Instance:} four terminals $x_1, x_2, x_3, x_4$ of a graph $G$
    that does not contain $C_3, \dots, C_k$.
  \item
    {\sc Question:} Does $G$ contain a centered tree covering $x_1,
    x_2, x_3, x_4$~?
  \end{description}
\end{theorem}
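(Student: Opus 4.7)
The proof will establish NP membership (which is immediate, as the vertex set of a centered tree covering the four terminals is a polynomial-size certificate whose validity is easily verified) and NP-hardness via a polynomial reduction from Theorem~\ref{th:npLin}.

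Given an instance $(G, x, y)$ of the problem of Theorem~\ref{th:npLin} with $N(x) = \{a, b\}$ and $N(y) = \{c, d\}$, the construction will produce a graph $G'$ by adjoining four new pendant terminals: both $x_1$ and $x_2$ will be adjacent only to $x$, while $x_3$ and $x_4$ will be adjacent only to $c$ and to $d$ respectively. Because all four new vertices are pendants, $G'$ creates no new induced cycle and thus still avoids $C_3, \ldots, C_k$; the construction is plainly polynomial.

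I will then prove that $G$ contains an induced cycle through $x, y$ if and only if $G'$ contains a centered tree covering $x_1, x_2, x_3, x_4$. For the forward direction, any induced cycle through $x, y$ can be written (up to swapping $c$ and $d$) as $x, a, P_1, c, y, d, P_2, b, x$, where $P_1$ and $P_2$ are induced paths of $G \setminus \{x, y\}$ from $a$ to $c$ and from $b$ to $d$, anticomplete to each other. I will check that the induced subgraph of $G'$ on $\{x, x_1, x_2, x_3, x_4\} \cup V(P_1) \cup V(P_2)$ is a centered tree of center $x$, with its four legs being the two edges $xx_1$, $xx_2$ and the two paths obtained by extending $P_1$ and $P_2$ by prepending $x$ and appending the corresponding terminal; inducedness follows from $C$ being induced and from the pendant structure of the terminals.

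The main obstacle is the reverse direction, whose essential step is a forcing argument for the center. In any centered tree $T$ of $G'$ covering the four terminals, the two legs ending at $x_1$ and at $x_2$ must each pass through $x$ (the unique neighbor of each pendant); since any two legs of a centered tree share only the center, this forces the center of $T$ to be $x$ itself. The remaining two legs must then leave $x$ through $a$ and $b$ (its only non-terminal neighbors in $G'$), travel inside $G$, and reach $c$ and $d$ via the pendant edges $cx_3$ and $dx_4$. A short case analysis using $N(y) = \{c, d\}$ and leg-disjointness will rule out $y$ lying on either of these legs. The pairwise anticompleteness of the legs, combined with the degree-$2$ conditions at $x$ and $y$, will then close the two interior paths into an induced cycle of $G$ through $x, y$. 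The subtle point is precisely this forcing of the center to be $x$, which is why two pendants are placed at $x$ (rather than one at $x$ and one at $y$, which would require two centers); once that is in hand, the rest is essentially routine verification.
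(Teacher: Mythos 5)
Your reduction is correct and essentially identical to the paper's: two pendant terminals attached at (a copy of) $x$ to force it to be the unique branch vertex, and one pendant terminal at each neighbour of $y$, so that a centered tree covering the four terminals corresponds exactly to an induced cycle through $x$ and $y$. The only difference is that the paper deletes $x$ and $y$ outright (replacing $x$ by a fresh vertex $c$ with the same neighbourhood), which spares it your small extra step of ruling $y$ out of the tree; both versions work.
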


\begin{proof}
  Let us consider an instance $G, x, y$ of the NP-complete problem of
  Theorem~\ref{th:npLin}.  Let $x', x''$ be the neighbors of $x$ and
  $y', y''$ be the neighbors of $y$. We prepare an instance $G', x_1,
  x_2, x_3, x_4$ of our problem as follows. We delete $x, y$.  We add
  five vertices $c, x_1, x_2, x_3, x_4$ and the following edges:
  $cx_1, cx_2, cx', cx'', x_3y', x_4y''$.  Now, $G', x_1, x_2, x_3,
  x_4$ is an instance of our problem.

  Since $x_1 \tp c \tp x_2$ is a $P_3$ of $G'$ and since $x_1, x_2$
  are of degree~1, every induced centered tree of $G'$ covering $x_1,
  x_2, x_3, x_4$ must be made of four edge-disjoint paths $c \tp x_1$,
  $c \tp x_2$, $c\tp \cdots \tp x_3$, $c\tp \cdots \tp x_4$.  So, such
  a tree exists if and only if there exists an induced cycle of $G$
  covering $x, y$. This proves that our problem is NP-complete.
\end{proof}

By the same way, four-in-a-centered-tree can be proved NP-complete for
several classes of graphs defined by a given list $\cal L$ of
forbidden subgraphs.  Each time, the proof relies on a direct
application of an NP-completeness result for $\cal L$-free graphs
from~\cite{leveque.lmt:detect}.  Going into the details of every
possible list that we can get would not be too illuminating since the
lists are described in~\cite{leveque.lmt:detect}.  Let us just mention
one result: four-in-a-centered-tree is NP-complete for triangle-free
graphs with every vertex except one of degree at most three.

\section*{Acknowledgement}

We are grateful to Paul Seymour for pointing out to us a simplification
in our original NP-completeness proof.

The authors are also grateful to the anonymous referee for his helpful
comments which helped in improving the quality of this article.

\end{document}